\newcommand{\XX}{\mathcal{X}}
\newcommand{\LL}{\mathcal{L}}
\newcommand{\FF}{\mathcal{F}}
\newcommand{\A}{\mathcal{A}}
\begin{document}

\title{Widths of regular and context-free languages}
\author{David Mestel \\ University of Luxembourg \\ david.mestel@uni.lu}
\date{}
\maketitle              
\begin{abstract}
Given a partially-ordered finite alphabet $\Sigma$ and a language $L\subseteq
\Sigma^*$, how large can an antichain in $L$ be (where $L$ is given the
lexicographic ordering)? More precisely, since $L$ will in general be infinite,
we should ask about the rate of growth of maximum antichains consisting of words
of length $n$. This fundamental property of partial orders is known as the
width, and in a companion work \cite{mestel2018quantifying} we show that the
problem of computing the information leakage permitted by a deterministic
interactive system modeled as a finite-state transducer can be reduced to the
problem of computing the width of a certain regular language. In this paper, we
show that if $L$ is regular then there is a dichotomy between polynomial and
exponential antichain growth. We give a polynomial-time algorithm to distinguish
the two cases, and to compute the order of polynomial growth, with the language
specified as an NFA. For context-free languages we show that there is a similar
dichotomy, but now the problem of distinguishing the two cases is undecidable.
Finally, we generalise the lexicographic order to tree languages, and show that
for regular tree languages there is a trichotomy between polynomial, exponential
and doubly exponential antichain growth.
\end{abstract}
\newcommand{\MM}{\mathcal{M}}
\newcommand{\RR}{\mathcal{R}}
\newcommand{\Aa}{\mathcal{A}}
\newcommand{\Bb}{\mathcal{B}}
\newcommand{\Iln}{I_{\leq n}}
\newcommand{\UU}{\mathcal{U}}
\renewcommand{\XX}{\mathcal{X}}
\newcommand{\Ss}{\mathcal{S}}
\newcommand{\prodstar}{\stackrel{*}{\Rightarrow}}
\newcommand{\arel}{\stackrel[\Aa]{}{\rightarrow}}
\newcommand{\arels}{\stackrel[\Aa]{*}{\rightarrow}}
\newcommand{\sub}[2]{#1 \leftarrow #2}

\theoremstyle{plain}
\newtheoremrep{theorem}{Theorem}
\newtheoremrep{lemma}[theorem]{Lemma}
\newtheoremrep{definition}[theorem]{Definition}
\newtheoremrep{proposition}[theorem]{Proposition}
\newtheoremrep{corollary}[theorem]{Corollary}

\section{Introduction}

Computing the size of the largest antichain (set of mutually incomparable
elements) is the `central' problem in the extremal combinatorics of partially ordered sets
(posets)~\cite{West1982}.  In addition to some general
theory~\cite{KLEITMAN197147}, it has attracted study for a variety of specific
sets, beginning with Sperner's Theorem on subsets of $\{1,\ldots,n\}$ ordered by
inclusion~\cite{sperner1928satz,canfield1978problem,peck1979maximum}, and for
random posets~\cite{brightwell1992random}.  The size of the largest antichain in
a poset $L$ is called the \emph{width} of $L$.

In this work we study languages (regular or context-free) over finite partially
ordered alphabets, with the lexicographic partial order. Since such languages
will in general contain infinite antichains, we study the sets $L_{=n}$ of words
of length $n$, and ask how the width of $L_{=n}$ grows with $n$; we call this
the \emph{antichain growth} rate of $L$.

In addition to its theoretical interest, the motivation for this work is the
study of quantified information flow in computer security: we wish 
to know whether a pair of isolated agents interacting with a common central system 
(for example different programs 
running on a single computer and communicating with the operating system) can 
obtain any information about each other's actions, and if so how much.
In a companion work \cite{mestel2018quantifying} we show that
if the central system is modeled as a deterministic finite-state
transducer then this leakage is equivalent to the width of a certain regular language
(roughly speaking, antichains corresponding to consistent sets of observations for
one agent). The
dichotomy we obtain in this paper thus corresponds to a dichotomy between
logarithmic and linear information flow.

In Section \ref{sec:lang} we set out basic definitions and results on the
lexicographic order, antichains and antichain growth. In Section \ref{sec:reg}
we show that for regular languages there is a dichotomy between polynomial and
exponential antichain growth, and give a polynomial-time algorithm for
distinguishing the two cases. In Section \ref{sec:precisegrowth} we give a
polynomial-time algorithm to compute the order of polynomial antichain growth.
In Section \ref{sec:contextfree} we show that for context-free languages there
is a similar dichotomy between polynomial and exponential antichain growth, but
that the problem of distinguishing the two cases is undecidable. In
Section \ref{sec:tree} we show that for regular tree languages there is a
trichotomy between polynomial, exponential and doubly exponential antichain
growth.  Finally in Section \ref{sec:openprobs} we discuss open problems.

\section{Languages, lexicographic order and antichains}\label{sec:lang}

\begin{definition}\label{def:lex}Let $\Sigma$ be a finite alphabet equipped with a partial
order $\preceq$.  Then the lexicographic partial order induced by $\preceq$ on
$\Sigma^*$ is the relation $\preceq$ given by
\begin{enumerate}[(i)]\item$\epsilon\preceq w$ for all $w\in\Sigma^*$ (where
$\epsilon$ is the empty word), and
\item For any $x,y\in\Sigma, w,w'\in\Sigma^*$, we have $xw\preceq yw'$ if and
only if either $x \prec y$ or $x=y$ and $w\preceq w'$.
\end{enumerate}
\end{definition}

If words $x$ and $y$ are comparable in this partial order we write $x\sim y$.  If $x$ is a
prefix of $y$ we write $x\leq y$.

For a language $L$, we will often write $L_{=n}$ to denote the set $\{w\in L\mid
|w|=n\}$ (with corresponding definitions for $L_{<n}$, etc.), and $|L|_{=n}$ for
$\left|L_{=n}\right|$.

The main subject of this work is \emph{antichains}, that is sets of words which are
mutually incomparable.  It will sometimes be useful also to consider
\emph{quasiantichains},
which are sets of words which are incomparable except
that the set may include prefixes (note that this is not a standard term).
The opposite of an antichain is a \emph{chain}, in which all elements are 
comparable.

\begin{definition}A language $L$ is an \emph{antichain} if for every $l_1, l_2
\in L $ with $l_1\neq l_2$ we have $l_1 \not\sim l_2$.  $L$ is a
\emph{quasiantichain} if for every
$l_1,l_2\in L$ we have either $l_1\leq l_2$, $l_2\leq l_1$  or $l_1\not \sim
l_2$.  $L$ is a \emph{chain} if for all $l_1,l_2\in L$ we have $l_1\sim l_2$.
\end{definition}

It is easy to see that the property of being an antichain is preserved by the
operations of prefixing, postfixing and concatenation.

\begin{lemma}[(Prefixing)]Let $w,w_1,w_2$ be any words.  Then $w_1\sim w_2$ if and only if
$ww_1\sim ww_2$.  Hence for any language $L$, $wL$ is an antichain
(respectively quasiantichain) if and only if $L$ is
an antichain (quasiantichain).\end{lemma}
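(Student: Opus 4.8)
The plan is to prove the first claim --- that $w_1 \sim w_2 \iff ww_1 \sim ww_2$ --- by induction on $|w|$, and then to read off the statement about (quasi)antichains as an immediate consequence.

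For the base case $w=\epsilon$ the equivalence is trivial. For the inductive step, write $w = xw'$ with $x\in\Sigma$ and $w'\in\Sigma^*$ of strictly smaller length. Since $ww_1 = xw'w_1$ and $ww_2 = xw'w_2$ both begin with the letter $x$, neither equals $\epsilon$, so clause (i) of Definition \ref{def:lex} contributes nothing and clause (ii) applies: $xw'w_1 \RR xw'w_2$ holds iff $x<x$ (impossible) or $x=x$ and $w'w_1\RR w'w_2$, that is, iff $w'w_1\RR w'w_2$. The same holds with the roles of $w_1$ and $w_2$ exchanged, so $ww_1\sim ww_2 \iff w'w_1\sim w'w_2$, and the induction hypothesis gives $w'w_1\sim w'w_2 \iff w_1\sim w_2$, completing the induction.

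For the consequence, note first that the prefix relation is cancellative in the obvious way: $ww_1\leq ww_2 \iff w_1\leq w_2$ (delete the common prefix $w$). Now suppose $L$ is an antichain and take distinct $l_1,l_2\in wL$, say $l_i = wm_i$ with $m_i\in L$; then $m_1\neq m_2$, so $m_1\not\sim m_2$, hence by the first part $l_1 = wm_1 \not\sim wm_2 = l_2$. Conversely, if $wL$ is an antichain and $m_1\neq m_2$ lie in $L$, then $wm_1,wm_2$ are distinct elements of $wL$, hence incomparable, hence $m_1\not\sim m_2$. The quasiantichain case is identical: for any $m_1,m_2\in L$, whichever of $m_1\leq m_2$, $m_2\leq m_1$, $m_1\not\sim m_2$ holds transfers to the corresponding statement about $wm_1,wm_2$, and conversely, using cancellativity of both $\leq$ and $\sim$.

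I do not expect a genuine obstacle here; the only point requiring a little care is keeping track of the base clause (i) of the lexicographic definition during the induction, which is handled by observing that prefixing by a non-empty word yields words distinct from $\epsilon$ (and if $w=\epsilon$ there is nothing to prove).
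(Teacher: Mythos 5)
Your proof is correct: the induction on $|w|$, with the observation that clause (i) of Definition \ref{def:lex} cannot fire for words beginning with a letter, and the cancellativity of both $\leq$ and $\sim$ under a common prefix, gives exactly the equivalence and its consequence for (quasi)antichains. The paper offers no proof of this lemma (it is stated as ``easy to see''), and your argument is precisely the routine verification the authors intend, so there is nothing to compare beyond noting that you have filled in the omitted details correctly.
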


\begin{lemma}[(Postfixing)]Let $w,w_1,w_2$ be any words.  Then $w_1\sim w_2$ if
$w_1w\sim w_2w$.  Hence for any language $L$, $Lw$ is an antichain if $L$ is an
antichain.\end{lemma}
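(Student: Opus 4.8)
The plan is to prove the Postfixing Lemma by the contrapositive: assuming $w_1w \sim w_2w$, I want to deduce $w_1 \sim w_2$. The statement ``$w_1w\sim w_2w$'' means $w_1w\RR w_2w$ or $w_2w\RR w_1w$; by symmetry it suffices to treat the first case and show it forces $w_1\RR w_2$ or $w_2\RR w_1$.

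First I would establish a structural observation: if $u\RR v$ for words $u,v$, then either $u$ is a prefix of $v$ (the ``$\epsilon$-extension'' situation propagated through equal letters), or there is a first position $i$ at which $u$ and $v$ differ, all earlier letters agree, and at position $i$ we have $u_i < v_i$ strictly. This is a routine induction on $\min(|u|,|v|)$ directly from Definition \ref{def:lex}: peeling off a common first letter preserves $\RR$ by clause (ii), and the base cases are governed by clause (i). I would phrase this as an auxiliary sublemma (or fold it inline) since it is the real content.

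Now apply this to $u = w_1w$ and $v = w_2w$. In the ``strict difference at position $i$'' case, note $i \le \min(|w_1w|,|w_2w|)$; if $i \le \min(|w_1|,|w_2|)$ then $w_1$ and $w_2$ already differ at position $i$ with $(w_1)_i < (w_2)_i$ and agree before, so $w_1\RR w_2$ and we are done. The only thing to rule out is that the first strict difference occurs inside the shared suffix $w$ while $w_1,w_2$ agree on their whole overlap --- but inside $w$ the letters of $u$ and $v$ are literally the same letter of $w$, so they cannot differ there; hence this first-difference position, if it exists at all, must lie within $\min(|w_1|,|w_2|)$, and moreover $w_1,w_2$ cannot be ``one a strict prefix of the other with a length mismatch absorbed by $w$'' in a way that breaks comparability. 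In the remaining ``prefix'' case, $w_1w \le w_2w$ as words, which (comparing lengths) forces $|w_1|\le|w_2|$ and $w_1$ a prefix of $w_2$, whence $w_1\le w_2$ and in particular $w_1\sim w_2$. Either way $w_1\sim w_2$.

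The consequence for languages is then immediate: if $L$ is an antichain and $Lw$ were not, there would be distinct $l_1w, l_2w \in Lw$ with $l_1w \sim l_2w$; by the first part $l_1\sim l_2$, and since $l_1\neq l_2$ this contradicts $L$ being an antichain. The main obstacle is purely bookkeeping in the sublemma --- being careful that the ``strict difference'' versus ``prefix'' dichotomy is genuinely exhaustive and that a strict difference in $u,v$ cannot be manufactured inside the common suffix $w$; once that case analysis is clean, everything else is a one-line deduction. Note the asymmetry with the Prefixing Lemma (which is an ``if and only if''): postfixing can only be a one-way implication because appending $w$ can destroy comparability --- e.g.\ if $w_1,w_2$ are incomparable, $w_1w$ and $w_2w$ are still incomparable, but the converse genuinely needs the argument above, and appending cannot \emph{create} comparability without it having been present, which is exactly what we proved.
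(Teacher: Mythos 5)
Your overall strategy (a first-difference analysis of the lexicographic order via the sublemma ``$u\RR v$ iff $u$ is a prefix of $v$ or the first differing position carries a strict inequality'') is the natural one, and the paper itself offers no proof of this lemma, treating it as routine; the issue is that one step of your case analysis is false as stated. You claim that if $w_1w\RR w_2w$ is witnessed by a first strict difference at position $i$, then $i\le\min(|w_1|,|w_2|)$, on the grounds that ``inside $w$ the letters of $u$ and $v$ are literally the same letter of $w$.'' That alignment argument only works when $|w_1|=|w_2|$; otherwise the two copies of $w$ are shifted relative to one another, and a first strict difference can occur beyond position $\min(|w_1|,|w_2|)$. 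Concretely, take $\Sigma=\{a,b,c\}$ with $c<b$, and $w_1=a$, $w_2=ab$, $w=c$: then $w_1w=ac$ and $w_2w=abc$ first differ at position $2>\min(|w_1|,|w_2|)=1$, where $c<b$, so $w_1w\RR w_2w$ via a strict difference lying inside a copy of $w$. The trailing clause (``moreover $w_1,w_2$ cannot be one a strict prefix of the other \ldots in a way that breaks comparability'') gestures at the right fact but is not an argument.

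The gap is easily repaired, because in exactly the subcase you tried to exclude the conclusion is trivially true: if the first difference between $w_1w$ and $w_2w$ occurs at a position greater than $\min(|w_1|,|w_2|)$, then $w_1$ and $w_2$ agree on their whole overlap, i.e.\ one is a prefix of the other, and a prefix is always comparable to its extensions (clauses (i) and (ii) of Definition~\ref{def:lex}, by the same induction as your sublemma), so $w_1\sim w_2$ irrespective of what happens inside $w$. The clean case split is therefore on $w_1,w_2$ rather than on $w_1w,w_2w$: either $w_1,w_2$ first differ at some position $i\le\min(|w_1|,|w_2|)$, in which case that is also the first difference of $w_1w,w_2w$, neither of $w_1w,w_2w$ is a prefix of the other, and the hypothesis together with your sublemma forces $(w_1)_i<(w_2)_i$ or $(w_2)_i<(w_1)_i$ with equality before, giving $w_1\sim w_2$; or else one of $w_1,w_2$ is a prefix of the other and they are comparable outright. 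With that adjustment your argument is correct; your handling of the genuine prefix case ($w_1w$ a prefix of $w_2w$) and the one-line deduction for languages are fine.
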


\begin{lemma}[(Concatenation)]Let $w_1,w_2,w_1',w_2'$ be any words such that
$w_1\not\leq w_2$ and $w_2\not \leq w_1$.  Then $w_1w_1'\sim w_2w_2'$ if and
only if $w_1\sim w_2$.  Hence if $L_1$ and $L_2$ are antichains then $L_1L_2$
is an antichain.\end{lemma}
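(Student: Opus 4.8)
The plan is to prove the Concatenation lemma by a direct case analysis, using the Prefixing lemma to reduce to a situation where the two words under comparison begin with distinct, comparable letters. First I would dispose of the degenerate cases: if one of $w_1,w_2$ is empty, then the hypothesis $w_1\not\leq w_2$ and $w_2\not\leq w_1$ fails (the empty word is a prefix of everything), so we may assume both are nonempty. Write $w_1 = x u$ and $w_2 = y v$ with $x,y\in\Sigma$ and $u,v\in\Sigma^*$. By repeatedly applying the Prefixing lemma (stripping off the longest common prefix of $w_1$ and $w_2$), I can further reduce to the case where $x\neq y$, because if $x=y$ then $w_1\sim w_2 \iff u\sim v$ and $w_1w_1'\sim w_2w_2' \iff uw_1'\sim vw_2'$, and we can induct on $|w_1|+|w_2|$; the base case of the induction is exactly the situation with distinct leading letters. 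Note that stripping the common prefix preserves the hypothesis, since a common prefix being stripped cannot make one of the remainders a prefix of the other unless it already was.

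So the crux is: assume $w_1 = xu$, $w_2 = yv$ with $x\neq y$. Then $w_1\sim w_2$ holds if and only if $x\sim y$ (i.e. $x<y$ or $y<x$), by Definition~\ref{def:lex}(ii) — indeed, since $x\neq y$, the second clause ``$x=y$ and the tails compare'' is impossible, so comparability of $w_1$ and $w_2$ is equivalent to $x<y$ or $y<x$. Likewise $w_1w_1' = xuw_1'$ and $w_2w_2' = yvw_2'$ still have distinct leading letters $x\neq y$, so by the same reasoning $w_1w_1'\sim w_2w_2'$ if and only if $x<y$ or $y<x$. Hence $w_1\sim w_2 \iff w_1w_1'\sim w_2w_2'$, which is the claim in this base case; combined with the inductive reduction above this gives the biconditional in general.

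For the final assertion, suppose $L_1$ and $L_2$ are antichains and take distinct elements $w_1w_1', w_2w_2' \in L_1L_2$ with $w_1,w_2\in L_1$ and $w_1',w_2'\in L_2$. If $w_1\neq w_2$, then since $L_1$ is an antichain $w_1\not\sim w_2$, and in particular neither is a prefix of the other, so the biconditional just proved gives $w_1w_1'\not\sim w_2w_2'$. If $w_1 = w_2$, then necessarily $w_1'\neq w_2'$, and by the Prefixing lemma $w_1w_1'\sim w_1w_2' \iff w_1'\sim w_2'$, which fails since $L_2$ is an antichain. Either way the two elements are incomparable, so $L_1L_2$ is an antichain.

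The only place requiring any care is the inductive reduction via the Prefixing lemma, specifically checking that stripping a common prefix preserves the non-prefix hypothesis; everything else is an immediate unfolding of Definition~\ref{def:lex}. I do not expect a genuine obstacle here — the lemma is essentially the observation that, for the lexicographic order, comparability of two words is decided entirely by the first position at which they differ (if they differ at all), a fact that is robust under appending arbitrary suffixes precisely because the hypothesis guarantees such a differing position exists within $w_1$ and $w_2$ themselves.
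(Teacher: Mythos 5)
Your proof is correct. The paper does not actually prove this lemma (it is one of the three closure facts dismissed as ``easy to see''), so there is nothing to diverge from: your argument --- strip the longest common prefix via the Prefixing lemma (noting this preserves the non-prefix hypothesis and keeps both remainders nonempty), then observe that with distinct leading letters $x\neq y$ comparability of both pairs reduces, by Definition~\ref{def:lex}(ii), to comparability of $x$ and $y$ --- is exactly the intended one. You also handle the one point that genuinely needs care in the antichain conclusion, namely that two distinct elements of $L_1L_2$ may share their $L_1$-factor, by falling back on the Prefixing lemma; the only unstated step is that $w_1\not\sim w_2$ rules out one being a prefix of the other, which follows since a prefix is always $\RR$-below its extension.
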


Clearly the property of being an antichain is not preserved by Kleene star,
since $L^*$ will contain prefixes for any non-empty $L$.  The best we can hope
for is that $L^*$ is a quasiantichain.

\begin{lemmarep}[(Kleene star)]\label{lem:kleene}Let $L$ be an antichain.  Then $L^*$ is a quasiantichain.
\end{lemmarep}
\begin{proof}
Suppose $w_1\sim w_2$ with $w_1,w_2\in L^*, w_1\not\leq w_2$ and
$w_2\not\leq w_1$ with $|w_1+w_2|$ minimal.  Say $w_i=w_i'w_i''$ with $w_i'\in
L, w_i''\in L^*$.  By minimality we have $w_1'\neq w_2'$, and since $L$ is an
antichain we also have $w_1'\not\sim w_2'$.  Hence by the concatenation lemma
$w_1'w_1''\not\sim w_2'w_2''$, a contradiction.
\end{proof} 

Ultimately we are going to care about the size of antichains inside particular
languages.  Since these will often be unbounded, we choose to ask about the rate
of growth; that is, if $L_1,L_2,L_3,\ldots\subseteq L$ are antichains such that
$L_i$ consists of words of length $i$, how quickly can $|L_i|$ grow with $i$?
We will call $\bigcup_i L_i$ an \emph{antichain family} and ask whether it grows
exponentially, polynomially, etc.

\begin{definition}A language $L$ is an \emph{antichain family} if for each $n$
the set $L_{=n}$ of words in $L$ of length $n$ is an antichain.\end{definition}

\begin{definition}A language $L$ is \emph{exponential} (or \emph{has exponential
growth}) if there exists some
$\epsilon>0$ such that 
\[\limsup_{n\rightarrow\infty} \frac{|L|_{=n}}{2^{\epsilon n}} > 0,\]
and the supremum of the set of $\epsilon$ for which this holds is the \emph{order} 
of exponential growth.

$L$ is \emph{polynomial} (or \emph{has polynomial growth}) if there exists some $k$ such that 
\[\limsup_{n\rightarrow\infty} \frac{|L|_{=n}}{n^k} < \infty.\]
If $0 < \limsup_{n\rightarrow\infty} \frac{|L|_{=n}}{n^k} < \infty$ then we say
that $L$ has polynomial growth of order $k$.

For notational convenience, we will sometimes later adopt the convention that a
language $L$ which is finite (and so $\limsup_{n\rightarrow
\infty}\frac{|L|_{=n}}{n^k}=0$ for all $k$) has polynomial growth of order $-1$.
\end{definition}

A reasonable alternative choice of notation would have been to define the
quantity $w_n$ to be the size of the largest antichain consisting of words of
length $n$, and then ask about the growth of the series $w_1,w_2,\ldots$.  This
is clearly equivalent to the definitions we have given above.

Antichain growth generalises the classical notion of language growth, which 
is just antichain growth with respect to the discrete partial order 
(in which all elements of $\Sigma$ are incomparable).

Note that we will sometimes use other characterisations that are clearly
equivalent; for instance $L$ has exponential growth if and only if there is some
$\epsilon$ such that $|L|_{=n}>2^{\epsilon n}$ infinitely often.  We will
sometimes refer to a language which is not polynomial as `super-polynomial', or
as having `growth beyond all polynomial orders'.  Of course there exist
languages whose growth rates are neither polynomial nor exponential; for
instance $|L|_{=n}=\Theta(2^{\sqrt{n}})$.

\begin{definition}A language $L$ \emph{has exponential antichain growth} if
there is an exponential antichain family $L'\subseteq L$.  $L$ \emph{has
polynomial antichain growth} if for every antichain family $L'\subseteq L$ we
have that $L'$ is polynomial.\end{definition}

Note that we could have chosen to define
exponential antichain growth as containing an exponential antichain (rather than
an exponential antichain family).  We will eventually see (Corollary
\ref{cor:defequiv}) that for regular languages the two notions are equivalent.
However, for general languages they are not; indeed the following proposition shows
that the two possible definitions are not equivalent even for context-free
languages.

\begin{proposition}\label{prop:antfam}There exists a context-free language $L$ such
that $L$ has exponential antichain growth but all antichains in $L$ are
finite.
\end{proposition}
\begin{proof}Let $\Sigma=\{a,b,0,1\}$ with $\prec\,= \{(a,b)\}$.  Let
\[L=\bigcup_{n=1}^\infty L_n = \bigcup_{n=1}^\infty a^{n-1}b\{0,1\}^n.\]
Then each $L_n$ is an antichain of size $2^n$ consisting of words of length $2n$, 
but we have $L_1>L_2>L_3>\ldots$ so any antichain is a subset of $L_k$ for
some $k$ and hence is finite (the notation $L_1>L_2$ means that for any $w_1\in
L_1$ and $w_2\in L_2$ we have $w_2\prec w_1$). Plainly $L$ is a context-free 
language.  
\end{proof}

We observed above that Kleene star does not preserve the property of being an
antichain.  We conclude this section by establishing Lemma \ref{lem:expquas}, which addresses
this problem; if our goal is to find a large antichain, it suffices to find a
large quasiantichain (where the precise meaning of `large' is having exponential
growth).

As a preliminary, we observe the straightforward fact that taking finite unions does not
change the polynomial or exponential growth character of languages.

\begin{lemmarep}\label{lem:finiteunion}Let $L_1,L_2,\ldots,L_k$ be languages, such
that $\bigcup_{i=1}^k L_i$ has exponential growth of order $\epsilon$ (respectively super-polynomial
growth).  Then $L_i$ has exponential growth of order $\epsilon$ (respectively
super-polynomial growth) for
some $i$.\end{lemmarep}
\begin{proof}
Suppose that $L=\cup_{i=1}^k L_i$ has exponential growth of order $\epsilon$.
Then for any $\epsilon' < \epsilon$ we have 
\[0 < \limsup_{n\rightarrow \infty} \frac{|L|_{=n}}{2^{\epsilon' n}} \leq \sum_{i=1}^k
\limsup_{n\rightarrow \infty}\frac{\left|L_i\right|_{=n}}{2^{\epsilon' n}},\]
and hence we have $\limsup_{n\rightarrow \infty}
\frac{\left|L_i\right|_{=n}}{2^{\epsilon' n}} > 0$ for some $i$.

Similarly, suppose that $L$ has growth beyond all polynomial orders.  Then for
every $m$ we have
\[\infty = \limsup_{n\rightarrow \infty} \frac{|L|_{=n}}{n^m} \geq
\max_{i=1,\ldots,k} \limsup_{n\rightarrow \infty}\frac{\left|L_i\right|_{=n}}{
n^m},\]
and hence there is some $i_m$ such that $\limsup_{n\rightarrow
\infty}\frac{\left|L_{i_m}\right|_{=n}}{n^m} =\infty$.  Now by the pigeon-hole
principle there must be some $i$ such that $i=i_m$ for arbitrarily large $m$,
and so $L_i$ has growth beyond all polynomial orders.
\end{proof}

We are now ready to prove Lemma \ref{lem:expquas}.  
We do this by constructing
an exponential prefix-free subset of the exponential quasiantichain, which will
therefore be an exponential antichain.  We do this by a Ramsey-style 
argument:
always maintaining the invariant of exponential growth, at each step we pick a
fixed word $w$ of length $k$, throw away that word if it is in the set, and also
throw away all longer words of which $w$ is \emph{not} a prefix.  We will see
that by Lemma \ref{lem:finiteunion} it is always possible to choose $w$ such
that this process preserves the invariant. 

\begin{lemmarep}\label{lem:expquas}Let $L$ be an exponential quasiantichain.  Then there exists an
exponential antichain $L'\subseteq L$.\end{lemmarep}
\begin{proof}
Suppose that $L$ has exponential growth, that is that
$|L|_{=n}>2^{\epsilon n}$ infinitely often for some $\epsilon$.
We will construct a prefix-free set $S\subset \Sigma^*$ such that
$S\cap L$ has exponential growth.  We will construct a sequence of sets
$S_0\supseteq S_1 \supseteq S_2\supseteq\ldots$ (and associated integers $n_0 <
n_1 < n_2 < \ldots$ and reals $\epsilon_0 > \epsilon_1 > \epsilon_2 > \ldots >
\epsilon'$ for initially chosen $0<\epsilon' < \epsilon$) such that the 
intersection of the $S_i$ is the desired set $S$.  
In particular we will maintain the invariant that
each $S_i\cap L$ has $|S_i\cap L|_{=n}>2^{\epsilon_i n}$ infinitely often.

Let $S_0=\Sigma^*$ and let $n_0 = 0$.  To produce $S_{i+1}$, note that by the
invariant we can choose some $n=n_{i+1}>n_i$ such that $|S_i\cap
L|_{=n}>2^{\epsilon_i n}$.   Now $S_i\cap L$ has exponential growth of order
$\epsilon_i$, hence so does $(S_i\cap L)_{>n}$.  Now 
\[(S_i\cap L)_{>n} = \bigcup_{w\in\Sigma^n}(S_i\cap L)\cap w\Sigma^+,\] which is
a finite union.  Hence by Lemma \ref{lem:finiteunion} we have that $(S_i\cap
L)\cap w\Sigma^+$ has exponential growth of order $\epsilon_i$ for some
$w=w_{i+1}\in\Sigma^n$.  Thus taking any $\epsilon_{i+1}$ with
$\epsilon'<\epsilon_{i+1}<\epsilon_i$ we have that $|(S_i\cap L)\cap
w_{i+1}\Sigma^+|_{=n}>2^{\epsilon_{i+1}n}$ infinitely often.  Now let 
\[S_{i+1} = S_i \cap \left( \Sigma^{\leq n_i} \cup \left(\Sigma^n\setminus
w_{i+1}
\right) \cup w_{i+1}\Sigma^+\right).\]
Informally, to form $S_{i+1}$ we leave intact the part of $S_i$ consisting of
words of length $n_i$ or shorter.  To this we add all the words of length $n$
in $S_i$ apart from $w_{i+1}$, and all the words of length $>n$ which have
$w_{i+1}$ as a
prefix.  Since $S_i\cap w_{i+1}\Sigma^+ \subseteq S_{i+1}$ we clearly preserve the
exponential growth invariant.

We must now show that $S$ is prefix free and that it has exponential
intersection with $L$.  Note that the set of word lengths in $S$ is $\{n_0,
n_1, n_2,\ldots\}$, and also that \[S_{=n_i}=\left(S_i\right)_{=n_i}.\]  So
\[\begin{aligned}
|S\cap L|_{=n_i} &= \left|S_i\cap L\right|_{=n_i} \\
&\geq \left|S_{i-1}\cap
L\right|_{=n_i}-1 \\
&> 2^{\epsilon_{i-1}n} -1\\
&> 2^{\epsilon' n}-1,
\end{aligned}\]
where the first inequality is by the construction of $S_i$ from $S_{i-1}$ (up to
a single word of length $n_i$ is removed, namely $w_i$), the second is by the
definition of $n_i$ and the third is by the definition of $\epsilon_{i-1}$.
Hence $S\cap L$ has exponential growth of order at least $\epsilon'$.

To show that $S$ is prefix free, we show that $S_i$ has no pair $w< w'$ such
that $|w|=n_i$.  Indeed, by the definition of $S_i$ we must have on the one hand
that $w\neq w_i$ but on the other that $w'\in w_i\Sigma^+$, and so $w\not\leq w'$.
Since $S\subseteq S_i$ for all $i$ and $S$ only contains words of length $n_i$
for some $i$, we have that $S$ is prefix-free.
\end{proof}

\section{Regular languages}\label{sec:reg}

The dichotomy between polynomial and exponential language growth for regular
languages has been independently discovered at least six times (see citations in
\cite{gawrychowski2008finding}), in each case based on the fact that a regular language $L$ has
polynomial growth if and only if $L$ is \emph{bounded} (that is, $L\subseteq
w_1^*\ldots w_k^*$ for some $w_1,\ldots,w_k$); otherwise $L$ has exponential
growth.

In \cite{gawrychowski2008finding}, Gawrychowski, Krieger, Rampersad and Shallit
describe a polynomial time algorithm 
for determining whether a language is bounded.  The key idea is to consider the
sets $L_q$ of words which can be generated beginning and ending at state $q$.
$L$ is bounded if and only if for every $q$ we have that $L_q$ is
\emph{commutative} (that is, that $L_q\subseteq w^*$ for some
$w$), and this can be checked in polynomial time.

In this section, we generalise this idea to the problem of antichain growth by
showing that $L$ has polynomial antichain growth if and only if $L_q$ is a chain for
every $q$, and otherwise $L$ has exponential antichain growth.  This is sufficient to
establish the dichotomy theorem (Theorem \ref{thm:dichotomy}).  To give an
algorithm for distinguishing the two cases (Theorem \ref{thm:regalg}), we show
how to produce an automaton whose language is empty if and only if $L_q$ is a
chain (roughly speaking the automaton accepts pairs of incomparable words in
$L_q$).  

Before proving the main theorems, we first establish (Lemma \ref{lem:prodpoly})
that if $L_1$ and $L_2$ have polynomial antichain growth then so does $L_1L_2$. 
Moreover if the rates of polynomial growth of $L_1$ and $L_2$ are at most $k_1$ 
and $k_2$ respectively then the rate of polynomial growth of $L_1L_2$ is at most
$k_1+k_2+1$.

\begin{lemmarep}\label{lem:prodpoly}Let $L_1,L_2$ be languages with polynomial
antichain growth of order at most $k_1$ and $k_2$ respectively.  Then
$L_1L_2$ has polynomial antichain growth of order at most $k_1+k_2+1$.
\end{lemmarep}
\begin{proof}
Let $C_1,C_2$ be such that for any antichain family $L\subseteq L_i$ we have
$\left|L\right|_n < C_i n^{k_i}$ for all $n$.  We have 
\[\left(L_1L_2\right)_{=n} = \bigcup_{i=0}^n \left(L_1\right)_{=i}
\left(L_2\right)_{=n-i},\]
and so it suffices to prove that each $\left(L_1\right)_{=i}
\left(L_2\right)_{=n-i}$ contains antichains of size at most proportional to
$n^{k_1+k_2}$.

Let $L\subseteq \left(L_1\right)_{=i} \left(L_2\right)_{=n-i}$ be an antichain.
Then by the concatenation lemma we have that $\left\{w\in \left(L_1\right)_{=i}
\middle| ww'\in L \text{ for some $w'$}\right\}$ is an antichain, and hence it
has size at most $C_1i^{k_1}$.  On the other hand, by the prefixing lemma we
have that the set $\left\{w' \in \left(L_2\right)_{=n-i} \middle| ww'\in
L\right\}$ is an antichain for each $w$, and hence it has size at most $C_2n^{k_2}$.  Since
\[L = \bigcup_{w\in \left(L_1\right)_{=i}} \left\{ww' \middle| w'\in
\left(L_2\right)_{=n-i}, ww'\in L\right\},\]
we have that 
\[\begin{aligned}|L| &\leq \left|\left\{w\in \left(L_1\right)_{=i}
\middle| ww'\in L \text{ for some $w'$}\right\}\right| \times \max_{w} 
\left|\left\{w' \in \left(L_2\right)_{=n-i} \middle| ww'\in
L\right\}\right|\\
&\leq C_1n^{k_1} C_2n^{k_2}\\
&= C_1C_2n^{k_1+k_2},
\end{aligned}\]
as required.
\end{proof}

We are now ready to prove the main theorem, generalising the condition for 
polynomial language growth (that $L_q$ is commutative for every $q$) to one for 
polynomial antichain growth: that $L_q$ is a chain for every relevant $q$.

\begin{definition}A state $q$ of an automaton $\Aa = (Q,\Sigma,\Delta,q_0,F)$ is
\emph{accessible} if $q$ is reachable from $q_0$ and \emph{co-accessible} if $F$
is reachable from $q$.
\end{definition}
\begin{definition}Let $\Aa = (Q,\Sigma,\Delta,q_0,F)$ be an NFA.  Then for each
$q_1,q_2\in Q$, the automaton
$\Aa_{q_1,q_2}\triangleq(Q,\Sigma,\Delta,q_1,\{q_2\})$.\end{definition}

\begin{theorem}\label{thm:dichotomy}
Let $\Aa = (Q,\Sigma,\Delta,q_0,F)$ be an NFA over a partially ordered alphabet.  
Then \begin{enumerate}[(i)]\item $\LL(\Aa)$ has polynomial antichain growth if and only if $\LL(\Aa_{q,q})$ is a chain
for every accessible and co-accessible state $q$, and \item if $\LL(\Aa)$ does not have polynomial
antichain growth then it contains an exponential antichain (and hence has
exponential antichain growth).\end{enumerate}
\end{theorem}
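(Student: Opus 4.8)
The plan is to prove the two directions of (i) separately, with (ii) falling out of the proof of the ``only if'' direction.

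For the \emph{only if} direction of (i) together with (ii): suppose $\LL(\Aa_{q,q})$ fails to be a chain for some bireachable state $q$, and fix $u,v\in\LL(\Aa_{q,q})$ with $u\not\sim v$. Since a prefix relation between distinct words already makes them comparable, $u$ and $v$ are prefix-incomparable and both nonempty, so $\alpha:=uv$ and $\beta:=vu$ are words of the same length $\ell\ge 1$, each labelling a loop at $q$. The concatenation lemma gives $\alpha\not\sim\beta$, and since $|\alpha|=|\beta|$ with $\alpha\ne\beta$ they are prefix-incomparable, so a second application of the concatenation lemma yields the strong form $\alpha s\not\sim\beta t$ for \emph{all} words $s,t$. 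I would then pick a word $p$ labelling a run $q_0\to q$ and a word $r$ labelling a run $q\to F$, and build, for every $m$, an antichain of size $2^m$ consisting of words of the form $p\,z_1z_2\cdots z_N\,r$ with each $z_i\in\{\alpha,\beta\}$ and $N$ depending on $m$: using only the prefixing lemma and the identity $\alpha s\not\sim\beta t$, any two such words that first differ in some block are incomparable. A little care is needed so that all of these combine into one infinite antichain rather than merely an antichain family; inserting a fixed marker block before each data block and a terminal marker block ensures that a shorter word of the family can never be a prefix of a longer one, so the first point of disagreement always lies in the marked region and the words are again incomparable. This infinite antichain has $2^m$ words of length linear in $m$, hence exponential growth, which proves the ``only if'' part and statement (ii) (the parenthetical is immediate, since the length-$n$ slices of an antichain form an antichain family).

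For the \emph{if} direction: assume $\LL(\Aa_{q,q})$ is a chain for every bireachable $q$, and (discarding non-bireachable states, which does not change the language) show $\LL(\Aa)$ has polynomial antichain growth. The crux is the claim that if $q,q'$ lie in the same strongly connected component $C$, then the set of words labelling walks from $q$ to $q'$ that stay inside $C$ is a chain. To see this I would fix a word $z$ labelling a walk from $q'$ back to $q$ inside $C$; for any two such words $w,w'$, both $wz$ and $w'z$ lie in $\LL(\Aa_{q,q})$, which is a chain, so $wz\sim w'z$, and then the postfixing lemma delivers $w\sim w'$. Granting the claim, I finish by decomposing accepting runs: each run visits a sequence of SCCs in topological order with no repeats, so it factors as a walk inside $C_1$ from $q_0$ to an exit state, a single bridging transition into $C_2$, a walk inside $C_2$, and so on, ending with a walk inside some $C_m$ to an accepting state. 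The number of possible \emph{skeletons} (the SCC path, the entry and exit states of each component, and the bridging letters) is bounded by a constant depending only on $\Aa$, and for a fixed skeleton the set of words realising it is $W_1\{c_1\}W_2\{c_2\}\cdots\{c_{m-1}\}W_m$ with each $W_i$ a chain (by the claim) and each $c_i$ a single letter. A chain, and a singleton, each have polynomial (indeed bounded) antichain growth, so iterating Lemma \ref{lem:prodpoly} shows each skeleton's language does too, and since $\LL(\Aa)$ is a finite union of these, so does $\LL(\Aa)$ (intersect a putative antichain family with each piece and add the polynomial bounds).

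I expect the main obstacle to be the \emph{if} direction, and specifically formulating the right claim: it is precisely what reduces an arbitrary regular language to a product of boundedly many chains and singletons, at which point the already nontrivial Lemma \ref{lem:prodpoly} can be applied. The key technical point there is that the postfixing lemma is available in exactly the direction needed, namely $w z\sim w'z \Rightarrow w\sim w'$, which is what lets ``walks within an SCC'' inherit chain-ness from the loop languages $\LL(\Aa_{q,q})$; obtaining a single exponential antichain (rather than just an exponential antichain family) in the other direction is the one place in the \emph{only if} argument that needs a small extra idea.
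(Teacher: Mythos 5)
Your proof is correct, but it takes a genuinely different route from the paper in both halves. For the exponential direction the paper takes the incomparable loops $w_1,w_2$ at $q$, invokes the Kleene star lemma (Lemma \ref{lem:kleene}) to get that $(w_1+w_2)^*$ is an exponential quasiantichain, and then extracts a single exponential antichain via the prefix-free-set construction of Lemma \ref{lem:expquas}; you instead symmetrise to the equal-length blocks $\alpha=uv$, $\beta=vu$, use the concatenation lemma twice to get the universal fact $\alpha s\not\sim\beta t$, and build the antichain explicitly, with a marker scheme to merge different block-counts into one antichain (this works: e.g.\ prefix each data block with $\alpha$ and terminate with $\beta$, so that when two words agree on all data blocks of the shorter one the first disagreement is again of the form $\beta s$ versus $\alpha t$). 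This makes part (ii) constructive and avoids Lemma \ref{lem:expquas} entirely, at the cost of a slightly fiddlier combinatorial construction. For the polynomial direction the paper argues by induction on $|Q|$, deleting the initial state and covering $\LL(\Aa)$ by $\LL(\Aa_{q_0,q_0})\cup\bigcup_{q,a}\LL(\Aa_{q_0,q_0})\,a\,\LL(\Aa'_q)$; you instead prove the structural claim that within an SCC the language of walks between any two fixed states is a chain (correctly, via a fixed return word $z$ and the postfixing lemma, which is stated in exactly the direction you need), and then cover $\LL(\Aa)$ by the finitely many skeleton languages $W_1\{c_1\}\cdots\{c_{m-1}\}W_m$, each a product of chains and single letters. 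Both arguments ultimately rest on the same key ingredients, Lemma \ref{lem:prodpoly} and Lemma \ref{lem:finiteunion}, so neither is more elementary at the core; your decomposition is somewhat longer but yields a sharper structural picture (a covering of $\LL(\Aa)$ by boundedly many products of chains and letters, analogous to the bounded-language normal form $w_1^*\cdots w_k^*$), while the paper's induction is shorter and uses only the loop languages $\LL(\Aa_{q,q})$ directly.
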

\begin{proof}
Suppose that $w_1,w_2\in\LL(\Aa_{q,q})$ with $w_1\not\sim w_2$ and $q$
accessible and co-accessible, so $w\in\LL(\Aa_{q_0,q})$ and $w'\in\LL(\Aa_{q,q'})$ for some
$w,w'$ and some $q'\in F$.  Now by the Kleene star Lemma we have that $(w_1+w_2)^*$ is an exponential quasiantichain
and so by Lemma \ref{lem:expquas} there is an exponential antichain $L'\subseteq
(w_1+w_2)^*$.  Then by the Prefixing and Postfixing Lemmas we have that
$wL'w'\subseteq L$ is an exponential antichain.

For the converse, we proceed by induction on $|Q|$.  
Let $Q'=Q\setminus\{q_0\}, F'=F\setminus\{q_0\}$ and
$\Delta'(q,a)=\Delta(q,a)\setminus\{q_0\}$ for all $q\in Q', a\in \Sigma$.  For any
$q\in Q'$, let $\Aa'_q=(Q',\Sigma,\Delta',q,F')$.  Then by
the inductive hypothesis we have that $\LL(\Aa'_q)$ has polynomial antichain
growth.  Also, since $L_{q_0} = \LL(\Aa_{q_0,q_0})$ is a 
chain it has polynomial (in
particular constant) antichain growth.  Now we have 
\[\LL(\Aa)\subseteq L_{q_0} \cup \bigcup_{q\in
Q'}\bigcup_{a\in\Delta(q_0,q)} L_{q_0}a\LL(\Aa'_q).\]

By Lemma \ref{lem:prodpoly}, each $L_{q_0}a\LL(\Aa'_q)$ also has polynomial
antichain growth, and hence by Lemma \ref{lem:finiteunion} so does the finite
union.
\end{proof}

A trivial restatement of part (ii) of the theorem shows that the two possible
definitions of antichain growth are equivalent.

\begin{corollary}\label{cor:defequiv}Let $L$ be a regular language.  Then $L$ has exponential
(respectively super-polynomial) antichain growth if and only if $L$ contains an
exponential (respectively super-polynomial) antichain.\end{corollary}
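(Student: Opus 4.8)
The plan is to reduce the corollary to part (ii) of Theorem~\ref{thm:dichotomy} together with Lemma~\ref{lem:suppoly}. Suppose first that $L$ has exponential antichain growth. By definition this means there is an exponential antichain family $L'\subseteq L$. An antichain family need not itself be an antichain (different length slices might have comparable words across lengths), so I cannot simply take $L'$; however, since $L$ is regular, I would invoke part (ii) of the theorem: having exponential antichain growth means $L$ does \emph{not} have polynomial antichain growth, hence by the theorem $L$ contains an exponential antichain. The super-polynomial case is identical, substituting Lemma~\ref{lem:suppoly} (which upgrades ``not polynomial antichain growth'' to ``contains a super-polynomial antichain family'') into the argument; but again the cleaner route is: if $L$ has super-polynomial antichain growth then it does not have polynomial antichain growth, so by part (ii) it contains an exponential (a fortiori super-polynomial) antichain.

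Conversely, suppose $L$ contains an exponential antichain $A\subseteq L$. Then for each $n$, the slice $A_{=n}$ is an antichain (being a subset of the antichain $A$), so $A$ is itself an antichain family; since $A$ is exponential as a language it is an exponential antichain family, witnessing exponential antichain growth of $L$. The same reasoning, reading ``super-polynomial'' for ``exponential'' throughout, handles the other case: a super-polynomial antichain is in particular a super-polynomial antichain family.

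Since the two directions fit together, the corollary follows. The only subtlety worth spelling out is the asymmetry between an antichain and an antichain family, and why one direction genuinely needs the theorem: containing an exponential antichain trivially gives exponential antichain growth, but the reverse implication fails for general languages (as the earlier context-free lemma shows), and it is exactly here that regularity, via Theorem~\ref{thm:dichotomy}(ii), is used. I do not anticipate any real obstacle; the proof is a short unwinding of definitions once part (ii) is in hand, which is presumably why the text calls it a ``trivial restatement''.

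\begin{proof}
If $L$ has exponential antichain growth then by definition it does not have polynomial antichain growth, so by Theorem~\ref{thm:dichotomy}(ii) it contains an exponential antichain. Similarly, if $L$ has super-polynomial antichain growth then it does not have polynomial antichain growth, so again by Theorem~\ref{thm:dichotomy}(ii) it contains an exponential, hence super-polynomial, antichain.

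Conversely, suppose $L$ contains an exponential (respectively super-polynomial) antichain $A$. For each $n$ the slice $A_{=n}$ is a subset of $A$ and hence an antichain, so $A$ is an antichain family contained in $L$; since $A$ is exponential (respectively super-polynomial) as a language, $L$ has exponential (respectively super-polynomial) antichain growth.
\end{proof}
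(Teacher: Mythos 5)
Your proof is correct and follows essentially the same route as the paper, which justifies the corollary as a direct consequence of Theorem~\ref{thm:dichotomy}(ii): the nontrivial direction (antichain growth implies an actual antichain) is exactly part (ii) applied after noting that exponential or super-polynomial antichain growth rules out polynomial antichain growth, and the converse is the trivial observation that an antichain is itself an antichain family. Your explicit unwinding of the antichain-versus-antichain-family distinction matches what the paper leaves implicit in calling this a ``trivial restatement''.
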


Using Theorem \ref{thm:dichotomy} we can produce an algorithm
for distinguishing the two cases.

\begin{theoremrep}\label{thm:regalg}There exists a polynomial time algorithm to determine whether the
language of a given NFA $\Aa$ has exponential antichain growth.\end{theoremrep}
\begin{proof}
First remove all states which are not accessible and co-accessible (trivial
flood fill: for instance, to compute the set of accessible states, initialise
the set $X=\{q_0\}$ and then repeatedly add states to $X$ if they can be reached
by a transition from a state in $X$), to give
$\Aa=(Q,\Sigma,\Delta,q_0,F)$.  We will now check
for each state $q$ whether $\LL(\Aa_{q,q})$ is a chain.

Let $\Sigma'$ denote the alphabet $\{x'|x\in \Sigma\}$ (that is, an alphabet of
fresh letters of the same size as $\Sigma$).  Let $\Aa'$ be the automaton
corresponding to $\Aa$ over $\Sigma'$.  Let $\Bb=(\Sigma\cup
\{s_0,s_1\},\Sigma\cup\Sigma',\widetilde{\Delta},s_0,\{s_1\})$ be an NFA, where
$s_0,s_1$ are fresh and $\widetilde{\Delta}$ is given by (for all $a\in\Sigma$):
$\widetilde{\Delta}(s_0,a)=\{a\}, \widetilde{\Delta}(a,a')=\{s_0\}$,
$\widetilde{\Delta}(a,b')=\{s_1\}\text{ for all $b$ with
$a\not\preceq b$ and $b\not\preceq a$}$, 
$\widetilde{\Delta}(s_1,a)=\widetilde{\Delta}(s_1,a')=\{s_1\}$,
and all other sets empty.

Then $\Bb$ has two important properties.  Firstly every word accepted by $\Bb$
is a shuffle of two words $w_1$ and $w_2'$, where $w_1,w_2\in\Sigma^*$ such that
$w_1\not\sim w_2$ and $w_2'$ is $w_2$ over the primed alphabet
(intuitively, the two words are equal for the part where $s_0$ is visited, and then they
first differ by two incomparable letters).  Secondly, for every $w_1\not\sim
w_2$ we have that the perfect shuffle of $w_1$ and $w_2$ is accepted by $\Bb$
(that is, if $w_1=a_1a_2\ldots a_k, w_2=b_1b_2\ldots b_{k'}$ and WLOG $k<k'$
then $a_1b_1'a_2b_2'\ldots a_kb_k'b_{k+1}'\ldots b_{k'}'$ is accepted by $\Bb$).

Hence $\LL(\Aa_{q,q})$ is a chain if and only if
$\LL((\Aa_{q,q}\interleave\Aa'_{q,q})\cap \Bb)$ is empty, which can be checked in
polynomial time (where $\interleave$ is the interleaving operator, which can be
realised by a product construction). Note that in fact it suffices to check a
single representative of each strongly connected component of $\A$.
\end{proof}

\section{Precise growth rates}\label{sec:precisegrowth}

In \cite{gawrychowski2008finding} the authors give an algorithm to compute the order of polynomial language growth for the language of a given NFA; on the other hand efficiently computing the order of exponential growth is an open problem.  In this section we give an algorithm to compute the order of polynomial antichain growth for the language of a given NFA.  We do this by first giving an algorithm for DFA, and then showing that in fact it also works for NFA.  We will assume throughout without loss of generality that all states are accessible and co-accessible.

\begin{definition}
Let $\A = (Q,q_0,F,\Sigma,\delta)$ be a DFA over a partially ordered alphabet.
Let $G_{\A}=(Q,E)$ be the directed graph with vertex-set $Q$ such that
$(q,q')\in E$ if and only if $q\xrightarrow{w} q'$ for some $w\in \Sigma^*$.

Let $G'_{\A}=(Q,E')$ be the directed graph with $(q,q')\in E'$ if and only if
there exist words $w\not\sim w'\in \Sigma^*$ such that $q\xrightarrow{w} q$ and
$q\xrightarrow{w'} q'$.  We will write $L_{q,q'}\triangleq \LL(\A_{q,q'})$.
\end{definition}

We will generally omit the subscript $\A$s from now on, where this will not
cause confusion.

Note that by Theorem \ref{thm:dichotomy}, we have that $G'$ is a directed
acyclic graph (DAG) if and only if $\LL(\A)$ has polynomial antichain growth. By
a similar argument to the proof of Theorem \ref{thm:regalg}, the graph $G'$ can
be computed in polynomial time. Clearly $G$ can be computed in polynomial time
using a flood fill.

\begin{definition}\label{def:DP}
Let $\A=(Q,q_0,F,\Sigma,\delta)$ be a DFA with polynomial antichain growth.  For
a directed path $P=q_0q_1\ldots q_l$ (not necessarily simple) in
$G_{\A}$, let
\[D(P) = \left|\left\{i\in \{0,\ldots,l-1\}\middle| (q_i,q_{i+1})\in
E(G'_{\A})\right\}\right| + \begin{cases}1\text{ if $|L_{q_m,q_l}|=\infty$} \\
0 \text{ otherwise.}\end{cases},\]
where $m=\max \{i+1|(q_i,q_{i+1})\in G'_\A\}$ if this exists, and 0 otherwise.
\end{definition}

Observe that if $|L_{q_m,q_l}|=\infty$ then we have $ww'^*w''\subseteq
L_{q_m,q_l}$ for some $w,w',w''$.

\begin{lemma}\label{lem:DAcomp}
Let $\A=(Q,q_0,F,\Sigma,\delta)$ be a DFA with polynomial antichain growth.  
Let $\mathcal{P}$ be the set of directed paths from $q_0$ to an element of $F$.  
Then the quantity 
\[D_\A=\max_{P\in\mathcal{P}} D(P)\]
is well-defined and can be computed in polynomial time.
\end{lemma}
\begin{proof}
To show that $D_\A$ is well-defined, observe that no directed cycle in $G$
contains an edge in $G'$.  Indeed, suppose that $q_1q_2\ldots q_1$ is a directed
cycle in $G$, with $(q_1,q_2)\in E(G')$.  Then we have $q_1\xrightarrow{w} q_1$
and $q_1\xrightarrow{w'} q_2$ for some $w\not\sim w'\in \Sigma^*$.  Also we have
$q_2\xrightarrow{w''} q_1$ for some $w''\in \Sigma^*$.  But then
$q_1\xrightarrow{w'w''}q_1$ and $w'w''\not\sim w$ by the Concatenation Lemma,
contradicting polynomial antichain growth of $\LL(\A)$.  Hence $D(P)$ is
bounded.

For a polynomial time algorithm, first expand $G$ and $G'$ by adding a sink vertex $v_f$ for each $f\in
F$.  For each $q$ such that $|L_{q,f}|=\infty$ put $(q,v_f)\in E(G)$ and $(q,v_f)\in
E(G')$.   Then add a further vertex $v$ with $(f,v)\in E(G)$ and $(v_f,v)\in
E(G)$ for all $f\in F$.  Then $D_\A$ is precisely the maximum
number of edges of $G'$ contained in a directed path from $q_0$ to $v$ in $G$.

Form the graph $G''$ on vertex-set
$Q\cup\{v\}$ by $(v_1,v_2)\in E(G'')$ if and only if there is a path from $v_1$
to $v_2$ in $G$ containing a single edge of $G'$.  Then we have that $G''$ is a DAG
(by the first observation), and $D_\A$ is the longest path from $q_0$ to $v$ in
$G''$, which can be found by a simple dynamic programming algorithm.
\end{proof}

We will show that the order of polynomial antichain growth of $\LL(\A)$ is 
precisely $D_\A-1$.

\begin{lemma}\label{lem:lowerbound}
Let $\A=(Q,q_0,F,\Sigma,\delta)$ be a DFA with polynomial antichain growth.
Then $\LL(\A)$ has polynomial antichain growth of order at least $D_\A-1$.
\end{lemma}

\begin{proof}
Let $P=q_0q_1\ldots q_l$ be a path with
$D(P)=D_\A$.  Let $i_1,\ldots,i_k$ be such that $(q_{i_j},q_{i_j+1})\in
E(G'_\A)$ for all $j$.  Let $w_1,\ldots,w_k,w'_1\ldots,w'_k,w \in\Sigma^*$ be
such that $w_j\not\sim w'_j$ for all $j$, $q_{i_j}\xrightarrow{w_j} q_{i_j}$ for
all $j$, $q_{i_j}\xrightarrow{w'_j} q_{i_{j+1}}$ for all $j<k$,
$q_{i_k}\xrightarrow{w'_k} q_l$, and $q_0\xrightarrow{w} q_{i_1}$.

Suppose that $|L_{q_m,q_l}|=\infty$ (with $m=i_k$ defined as in Definition
\ref{def:DP}), and let $w',w'',w'''\in\Sigma^*$ be such
that $w'w''^*w'''\subseteq L_{q_m,q_l}$.  Then $L=ww_1^*w_1'w_2^*w_2'\ldots
w_k^*w'w''^*w'''$ is an
antichain family with polynomial growth of order $k=D_\A-1$.  Similarly if
$|L_{q_m,q_l}|<\infty$, then $L=ww_1^*w_1'w_2^*w'_2\ldots w_k^*w'_k$ is an antichain
with polynomial growth of order $k-1=D_\A-1$.
\end{proof}

We will now prove the upper bound.  Our strategy will be to classify words by
the edges of $G'$ they visit.  We first show a preliminary lemma, which
bounds the antichain growth from regions between edges of $G'$.

\begin{lemma}\label{lem:blackruns}
Let $q_1,q_2\in Q$, and let $L\subseteq L_{q_1,q_2}$ be the set of words such that no
edges of $G'$ appear in the runs corresponding to elements of $L$.  Then $L$ has
antichain growth of order at most 0.
\end{lemma}
\begin{proof}
Without loss of generality we may assume that $\A$ does not have any transitions
labelled by more than a single letter (by introducing additional states if
necessary; in particular we can set $Q'=Q\times \Sigma$ and ensure that
$\delta'(q,x) \in Q\times\{x\}$ for all $x\in \Sigma$).

We will show that $L$ cannot contain two incomparable words that correspond
after removal of loops to the same sets of simple paths in $G$.\footnote{Note
that since removal of loops may be done in many different ways, a single path
may correspond to multiple simple paths.  We are asserting that $L$ cannot
contain two incomparable words which correspond to precisely the same
\emph{sets} of simple paths.}  Since $G$ is
finite and hence contains only finitely many simple paths, this suffices to
establish the result.

Suppose that $w_1\not\sim w_2$ correspond to the same simple path $P$.  Suppose
that the first point of divergence of $w_1$ and $w_2$ is at state $q$; that is,
that $w_1=wx_1w_1'$ and $w_2=wx_2w_2'$ with $x_1\neq x_2\in \Sigma$ and
$q_1\xrightarrow{w} q$ (see Figure \ref{fig:blackrun}).  Without loss of generality we may assume that $q$ and
$\delta(q,x_1)$ lie on $P$.  

Since the path for $w_2$ corresponds to $P$ after removal of cycles, we must
have that $w_2'=w_2''w_2'''$ with $q\xrightarrow{x_2w_2''} q$ and
$q\xrightarrow{w_2'''}q_2$.  But $w_1\not\sim w_2$ and $x_1\neq x_2$ so
$x_1\not\sim x_2$ and so $x_1\not\sim x_2w_2''$.  Hence $(q,\delta(q,x_1))\in
G'$, which is a contradiction.
\end{proof}

\begin{figure}[!htb]
\centering
\includegraphics[width=0.5\textwidth]{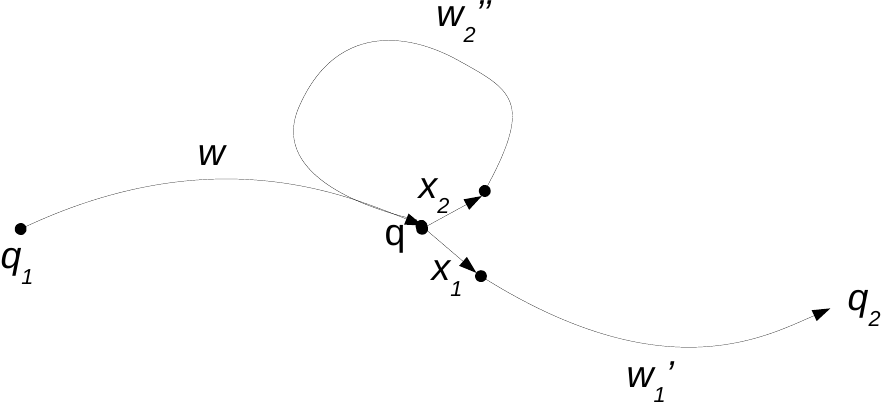}
\caption{The proof of Lemma \ref{lem:blackruns}}\label{fig:blackrun}
\end{figure}

\begin{lemma}\label{lem:upperbound}
Let $\A=(Q,q_0,F,\Sigma,\delta)$ be a DFA with polynomial antichain growth.
Then $\LL(\A)$ has polynomial antichain growth of order at most $D_\A-1$.
\end{lemma}

\begin{proof}
We may assume without loss of generality that there is only a single accepting
state, say $q_f$ (otherwise consider seperately the automata $\A_1,\ldots,
\A_{|F|}$ which agree with $\A$ except for having only a single accepting state; 
then on the one hand we have $D_{\A}=\max D_{\A_i}$, but on the other hand 
$\LL(\A) = \bigcup \A_i$ which is a finite union and hence the order of antichain growth 
of $\LL(\A)$ is the maximum of the orders of growth of the $\LL(\A_i)$).

We classify words by the edges of $G'$ that appear in their accepting runs.  We
shall show that the set of words corresponding to a fixed sequence $P$ of $G'$-edges
has antichain growth of order at most $D(P)$ (where $D(P)=|P|-1$ or $|P|$
depending on whether the set of accepted words beginning at the last vertex of
$P$ is finite).  Since the number of relevant 
$G'$-edge sequences is finite (recalling that no edge of $G'$ is contained in a
directed cycle in $G$ and so no $G'$-edge can appear more than once), this will
suffice to establish the result.

Let $(q_1,q_1'),\ldots,(q_k,q_k')$ be a set of $G'$-edges.  Then the set $L$ of
words which have this sequence of $G'$-edges in their run is given by
\[L=L'_{q_0,q_1}X_1L'_{q_1',q_2}X_2L'_{q_2',q_3} \ldots X_k L'_{q_k',q_f},\]
where $X_i = \left\{x\in \Sigma \mid \delta(q_i,x)=q_i'\right\}$ and
$L'_{q,q'}\subset L_{q,q'}$ is the set of words whose runs do not include edges
of $G'$.

The $X_i$ are finite and hence have antichain growth of order $-1$.  By Lemma
\ref{lem:blackruns} the $L'_{q_i',q_{i+1}}$ and also $L'_{q_0,q_1}$ and
$L'_{q_k',q_f}$ have antichain growth of order at most 0.  Moreover if
$L_{q_k',q_f}$ is finite then so is $L'_{q_k',q_f}\subseteq L_{q_k',q_f}$ and so
it has antichain growth of order $-1$.  The result follows by Lemma
\ref{lem:prodpoly}.
\end{proof}

Combining Lemmas \ref{lem:DAcomp}, \ref{lem:lowerbound} and \ref{lem:upperbound} yields

\begin{theorem}\label{thm:DFAgrowth}Let $\A=(Q,q_0,F,\Sigma,\delta)$ be a DFA with polynomial
antichain growth.  Then $\LL(\A)$ has polynomial antichain growth of order exactly
$D_\A-1$, which can be computed in polynomial time.
\end{theorem}

We now show how to extend this algorithm to the case of NFA.  Note that $D_\A$
as defined above is well-defined for NFA just as for DFA, and that the algorithm
to compute it in polynomial time is equally applicable.  It therefore remains to
show that for NFA we also have that if $\A$ has polynomial antichain growth then
it has antichain growth of order exactly $D_\A-1$.

We do this by showing (Lemma \ref{lem:DAinvar}) that $D_\A$ depends only on
the language $\LL(\A)$, so that if $\A$ and $\A'$ are NFA with
$\LL(\A)=\LL(\A')$ then $D_\A=D_{\A'}$.  Having shown this we then consider $\A'$
to be the determinisation of $\A$.  This is a DFA with $\LL(\A')=\LL(\A)$, and by
Theorem \ref{thm:DFAgrowth} we have that $\LL(\A')$ has polynomial antichain
growth of order $D_{\A'}-1=D_{\A}-1$.

We will first show (Lemma \ref{lem:nfaoffset}) that if
$L=v_0w_1^*v_1w_2^*v_2\ldots w_k^*v_k \subseteq \LL(\A)$ then there exists a
single sequence of states $q_1,q_2,\ldots,q_k$ which essentially realises $L$
(that is, up to various offsets we have $v_i\in
\LL(\A_{q_i,q_{i+1}})$ and $w_i^*\in \LL(\A_{q_i,q_i})$).

\begin{lemma}\label{lem:nfaoffset}Let $\A=(Q,q_0,F,\Sigma,\Delta)$ be an NFA such that
$v_0w_1^*v_1w_2^*v_2\ldots w_k^*v_k \subseteq \LL(\A)$.  Then then there exists
a sequence of states $q_1,q_2,\ldots,q_{k+1}$ and integers $m_1,m_2,\ldots m_k$,
$m_1',m_2',\ldots, m_k'$
and $n_1,n_2,\ldots,n_k$ such that 
\begin{enumerate}[(i)]
\item $v_0w_1^{m_1} \in \LL(\A_{q_0,q_1})$ and $w_k^{m'_k}v_k \in
\LL(\A_{q_k,F})$,
\item for all $0 < i < k$ we have $w_i^{m'_i} v_i w_{i+1}^{m_{i+1}}
\in \LL(\A_{q_i,q_{i+1}})$, and
\item for all $0 < i \leq k$ we have
$w_i^{n_i} \in \LL(\A_{q_i,q_i})$.
\end{enumerate}
\end{lemma}
\begin{proof}
Consider an accepting run for $v_0w_1^{|Q|+1}v_1w_2^{|Q|+1}v_2 \ldots
w_k^{|Q|+1}v_k \in \LL(\A)$, and write $q(s)$ for the state reached in this run
after the word $s$.  By the pigeon-hole principle, we must have
$q(v_0w_1^{m_1})=q(v_0w^{m_1+n_1})=q_1$ (say) for some $m_1\geq 0$ and some $n_1 > 0$ with 
$m_1+n_1\leq |Q|+1$.  Let $m_1'=|Q|+1-m_1-n_1$.  Similarly for each $i$ we have
$q(v_1w_1^{|Q|+1}v_2\ldots w_i^{m_i})=q(v_1w_1^{|Q|+1}v_2\ldots w_i^{m_i+n_i}) =
q_i$ (say) for some $m_i\geq 0$ and $n_i > 0$ with $m_i+n_i\leq |Q|+1$.  Let
$m'_i=|Q|+1-m_i-n_i$.  Then these $q_i,m_i,m'_i$ and $n_i$ give the result.
\end{proof}

\begin{lemma}\label{lem:DAinvar}
Let $\A$ and $\A'$ be NFA with $\LL(\A)=\LL(\A')$.  Then $D_{\A}=D_{\A'}$.
\end{lemma}
\begin{proof}
Let $\A=(Q,q_0,F,\Sigma,\Delta)$ and $\A'=(Q',q_0',F',\Sigma,\Delta')$.

Suppose that $D_{\A'} = k$.  Then by an identical argument to the proof of Lemma
\ref{lem:lowerbound} we have that $v_0w_1^*v_1w_2^*v_2 \ldots w_k^* v_k\subseteq
\LL(\A')=\LL(\A)$ for some $v_0,\ldots,v_k,w_1,\ldots,w_k\in \Sigma^*$ with $w_i\not\sim
v_i$.  Then by Lemma \ref{lem:nfaoffset} there exists a sequence of states
$q_1,q_2,\ldots,q_{k+1} \in Q$ and integers $m_1,m_2,\ldots,m_k, m'_1,m'_2,
\ldots m'_k$ and
$n_1,n_2,\ldots,n_k$ such that (i)--(iii) in the statement of the lemma hold.
Now since $w_i\not \sim v_i$ we have $w_i^{k_in_i} \not\sim
w_i^{m'_i}v_iw_{i+1}^{m_{i+1}}$ for sufficiently large $k_i$ and so 
\[D_{\A}\geq k = D_{\A'}.\]

Similarly $D_{\A'}\geq D_{\A}$, and hence $D_{\A}=D_{\A'}$.
\end{proof}

\begin{theorem}\label{thm:nfapoly}
Let $\A$ be an NFA with polynomial antichain growth.  Then $\LL(\A)$ has
polynomial antichain growth of order exactly $D_{\A}-1$.
\end{theorem}
\begin{proof}
Let $\A'$ be the powerset determinisation of $\A$, so $\A'$ is a DFA with
$\LL(\A')=\LL(\A)$.  By Theorem \ref{thm:DFAgrowth}, $\LL(\A')$ has polynomial
antichain growth of order exactly $D_{\A'}-1$, and by Lemma \ref{lem:DAinvar}
we have $D_{\A'}=D_{\A}$.
\end{proof}

\section{Context-free languages}\label{sec:contextfree}

In \cite{ginsburg1964bounded}, Ginsburg and Spanier show (Theorem 5.1) that a
context-free grammar $G$ generates a bounded language if and only if the sets
$L_A(G)$ and $R_A(G)$ are commutative for all non-terminals $A$, where $L_A$ and
$R_A$ are respectively the sets of possible $w$ and $u$ in productions $A\prodstar
wAu$.  They also give an algorithm to decide this (which
\cite{gawrychowski2008finding}
improves to be in polynomial time).

We generalise this to our problem by showing that $G$ generates a language with
polynomial antichain growth if and only $L_A(G)$ and also the sets $R_{A,w}(G)$ of
possible $u$ for each fixed $w$ are chains, and that otherwise $\LL(G)$ has
exponential antichain growth.  However, we will show that the problem of
distinguishing the two cases is undecidable, by reduction from the CFG
intersection emptiness problem.

Except where otherwise specified, we will assume all CFGs have starting symbol $S$ and that all nonterminals are
accessible and co-accessible: for any nonterminal $A$ we have $S\prodstar
uAu'$ for some $u,u'\in \Sigma^*$ and $A\prodstar v$ for some $v\in\Sigma^*$.

\begin{definition}Let $G$ be a context-free grammar (CFG) over $\Sigma$.  Then for any
nonterminal $A$ let 
\[L_A(G) = \{w\in \Sigma^*| \exists u\in \Sigma^*: A\prodstar
wAu\}.\]
\end{definition}

\begin{lemma}Let $G$ be a CFG over $\Sigma$ and $A$ some nonterminal such that
$L_A(G)$ is not a chain.  Then $\LL(G)$ contains an exponential antichain.
\end{lemma}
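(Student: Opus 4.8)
The plan is to follow the shape of the proof of Theorem~\ref{thm:dichotomy}, adapted to the fact that a context free pump on $A$ carries a tail as well as a head. Since $L_A(G)$ is not a chain, fix $w_1,w_2\in L_A(G)$ with $w_1\not\sim w_2$, and fix $u_1,u_2\in\Sigma^*$ with $A\prodstar w_iAu_i$. Because $\epsilon$ is comparable with every word we have $w_1,w_2\neq\epsilon$, and since $w_1\not\sim w_2$ neither of $w_1,w_2$ is a prefix of the other, so $\{w_1,w_2\}$ is an antichain, and in particular a prefix code. As $A$ is bireachable, fix $p,q,r\in\Sigma^*$ with $S\prodstar pAq$ and $A\prodstar r$. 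Then, by iterating $A\prodstar w_iAu_i$ and finally applying $A\prodstar r$, for every $V=v_1\cdots v_n$ with each $v_j\in\{w_1,w_2\}$ we have $A\prodstar VA\bar{V}\prodstar Vr\bar{V}$, and hence $pVr\bar{V}q\in\LL(G)$, where $\bar{V}:=\bar v_n\cdots\bar v_1$ with $\bar w_1:=u_1$ and $\bar w_2:=u_2$; this is well defined because $\{w_1,w_2\}$ is a prefix code, so the factorisation of $V$ into $w_1$'s and $w_2$'s is unique.

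The first substantive step is to extract a large antichain already inside $\{w_1,w_2\}^*$. By the Kleene star lemma (Lemma~\ref{lem:kleene}), $\{w_1,w_2\}^*$ is a quasiantichain; and it is exponential, because $w_1$ and $w_2$ do not commute (two words commute only if both are powers of a common word, in which case the shorter is a prefix of the longer, contradicting $w_1\not\sim w_2$), so $\{w_1w_2,w_2w_1\}$ consists of two distinct words of the same length, whence $\{w_1w_2,w_2w_1\}^*\subseteq\{w_1,w_2\}^*$ has exponential growth. Hence Lemma~\ref{lem:expquas} applies and yields an exponential antichain $\mathcal{V}\subseteq\{w_1,w_2\}^*$.

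It remains to transport $\mathcal{V}$ into $\LL(G)$ while retaining both properties. Put $L':=\{pVr\bar{V}q : V\in\mathcal{V}\}\subseteq\LL(G)$. For distinct $V,V'\in\mathcal{V}$ we have $V\not\sim V'$, hence $V\not\leq V'$ and $V'\not\leq V$; the Concatenation Lemma then gives $Vr\bar{V}q\not\sim V'r\bar{V'}q$, and prepending the common prefix $p$ (Prefixing Lemma) gives $pVr\bar{V}q\not\sim pV'r\bar{V'}q$, so $L'$ is an antichain (and $V\mapsto pVr\bar{V}q$ is injective on $\mathcal{V}$). The reason for passing through the antichain $\mathcal{V}$ rather than through $\{w_1,w_2\}^*$ directly is exactly that no member of $\mathcal{V}$ is a prefix of another, so the variable tail $\bar{V}$ cannot create comparabilities. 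Finally, $L'$ is exponential: the only subtlety is that the map spreads one length class of $\mathcal{V}$ over several length classes of $L'$, since $|\bar{V}|$ depends on how many $w_1$- and $w_2$-factors $V$ has; but the spread is only linear, $|V|\leq|pVr\bar{V}q|\leq |p|+|q|+|r|+\bigl(1+\max(|u_1|,|u_2|)\bigr)|V|$, so for each of the infinitely many $m$ with $|\mathcal{V}|_{=m}$ exponentially large the corresponding (pairwise distinct) words of $L'$ occupy only linearly many length classes, whence some single length $\ell(m)$ carries an exponentially large subset of $L'$; since $\ell(m)\to\infty$ and $m$ is bounded below by a positive constant times $\ell(m)$, a routine $\limsup$ estimate gives exponential growth of $L'$. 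I expect this last length bookkeeping to be the only real friction; the structural heart of the argument is the reduction, via Lemmas~\ref{lem:kleene} and~\ref{lem:expquas}, to finding an antichain inside $\{w_1,w_2\}^*$.
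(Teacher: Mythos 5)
Your proof is correct and follows essentially the same route as the paper's: extract an exponential antichain from the quasiantichain $(w_1+w_2)^*$ via Lemmas \ref{lem:kleene} and \ref{lem:expquas}, then transport it into $\LL(G)$ with the tail map (your $\bar V$, the paper's $\phi$) using the Concatenation and Prefixing/Postfixing Lemmas, with the same bounded-length-distortion bookkeeping at the end. The only differences are that you spell out details the paper leaves implicit (unique factorisation via the prefix-code property, the explicit pigeonhole over length classes), which is fine.
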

\begin{proof}
Since $L_A(G)$ is not a chain, we have $w_1,w_2,u_1,u_2$ with $w_1\not\sim w_2$
such that $A\prodstar w_1Au_1$ and $A\prodstar w_2Au_2$.  Now $A$ is accessible
and co-accessible
so also $S\prodstar uAu'$ and $A\prodstar v$ for some $u,u',v\in\Sigma^*$.

Hence \[uw_{i_1}w_{i_2}\ldots w_{i_k}vu_{i_k}u_{i_{k-1}}\ldots u_{i_1}u'\subseteq
\LL(G),\] for any $i_1i_2\ldots i_k\in \{1,2\}^*$.  Write
$\phi:(w_1+w_2)^*\rightarrow (u_1+u_2)^*$ for the map $w_{i_1}w_{i_2}\ldots w_{i_k}
\mapsto u_{i_k}u_{i_{k-1}}\ldots u_{i_1}$ (with any ambiguity resolved
arbitrarily).

Now $\{w_{i_1}w_{i_2}\ldots w_{i_k}|i_1\ldots i_k\in \{1,2\}^*\} = (w_1+w_2)^*$
is a quasiantichain by Lemma \ref{lem:kleene}, clearly it is exponential and
hence by Lemma \ref{lem:expquas} it contains an exponential antichain $L$.  By
the Concatenation Lemma we have that $L'=\{lv\phi(l)|l\in L\}$ is an antichain, and
it is exponential because there is a bijection between $L$ and $L'$ such that
the length of each word in $L'$ exceeds the length of the corresponding word in
$L$ by a factor of at most $\frac{|v|+\max(|u_1|,|u_2|)}{\min(|w_1|,|w_2|)}$.  By
the Prefixing and Postfixing Lemmas we have that $uL'u'\subseteq \LL(G)$ is an
exponential antichain.
\end{proof}

\begin{definition}Let $G$ be a CFG over $\Sigma$.  Then for any nonterminal $A$
and any $w\in\Sigma^*$, let
\[R_{A,w}(G) = \{u\in \Sigma^*| A\prodstar wAu\}.\]
\end{definition}

\begin{lemma}Let $G$ be a CFG over $\Sigma$, $A$ some nonterminal and
$w\in\Sigma^*$ such that $R_{A,w}(G)$ is not a chain.  Then $\LL(G)$ has
exponential antichain growth.
\end{lemma}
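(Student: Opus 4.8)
The plan is to follow the template of the preceding lemma (the $L_A$ case), while accounting for the structural difference that here the incomparable data lies to the \emph{right} of the pumped nonterminal whereas the left context $w^{k}$ grows with the number of pumping steps. Because of this I do not expect to extract a single exponential antichain from one pumping pattern — the left context is not a fixed string, so the Prefixing Lemma cannot be applied uniformly — so I will instead build an exponential antichain \emph{family}, contributing $2^{k}$ words of a common length for each $k$.

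First I would fix $u_1,u_2\in R_{A,w}(G)$ with $u_1\not\sim u_2$; note $u_1,u_2\neq\epsilon$ (as $\epsilon$ is comparable to every word) and $u_1\neq u_2$, so $\{u_1,u_2\}$ is an antichain. Repeatedly substituting the middle $A$ in $A\prodstar wAu_1$ and $A\prodstar wAu_2$ yields $A\prodstar w^{k}A\,u_{i_k}u_{i_{k-1}}\cdots u_{i_1}$ for every $i_1\cdots i_k\in\{1,2\}^{k}$. The next step, which I expect to be the only genuinely non-routine point, is to equalise lengths: set $a:=u_1u_2$ and $b:=u_2u_1$. Substituting shows $a,b\in R_{A,w^{2}}(G)$; moreover $a\neq b$ (otherwise $u_1$ and $u_2$ commute, hence are powers of a common word, hence comparable — contradiction), $|a|=|b|$, and $\{a,b\}$ is an antichain since it is a two-element subset of the antichain $\{u_1,u_2\}\{u_1,u_2\}$ (Concatenation Lemma). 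Iterating as before with $\{a,b\}$ in place of $\{u_1,u_2\}$, and using bireachability of $A$ to fix $u',u''$ with $S\prodstar u'Au''$ and $v$ with $A\prodstar v$, I obtain $u'\,w^{2k}\,v\,(a+b)^{k}\,u''\subseteq\LL(G)$ for every $k\geq1$.

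Write $L'_k$ for this set. Since $|a|=|b|$, all its words have the same length $N_k=|u'|+2k|w|+|v|+k|a|+|u''|$; and since $a\neq b$ with $|a|=|b|$, the $2^{k}$ products of $k$ factors from $\{a,b\}$ are pairwise distinct (compare them block by block), so $|L'_k|=2^{k}$. Repeated use of the Concatenation Lemma shows $(a+b)^{k}$ is an antichain, and then the Prefixing and Postfixing Lemmas give that $L'_k$ is an antichain — note that no quasiantichain argument is needed here, since all words of $L'_k$ share a length and so none is a prefix of another. Because $|a|=|u_1|+|u_2|\geq2$, the $N_k$ are strictly increasing, so the $L'_k$ occupy pairwise-disjoint length classes and $L':=\bigcup_{k\geq1}L'_k\subseteq\LL(G)$ is an antichain family. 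Finally $N_k\leq Ck$ for a suitable constant $C$, so $|L'|_{=N_k}=2^{k}\geq 2^{N_k/C}$ and therefore $L'$ has exponential growth (of order at least $1/(2C)$); that is, $\LL(G)$ has exponential antichain growth. In short, the engine is the same pump-and-close-off construction as for $L_A$; the step needing care is the passage to the equal-length antichain $\{u_1u_2,u_2u_1\}$, still contained in some $R_{A,w^{m}}(G)$, which is exactly what lets the sets produced for the different values of $k$ be glued into a single antichain family.
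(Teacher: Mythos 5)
Your proposal is correct and follows essentially the same route as the paper: the paper's proof also passes to the equal-length pair $u_1u_2,u_2u_1$ and takes the antichain family $\bigcup_i uw^{i}v(u_1u_2+u_2u_1)^iu'$ (your more careful bookkeeping with $w^{2k}$ matches the number of pumping steps, where the paper writes $w^i$). No gaps; the extra checks you include (nonemptiness of $u_1,u_2$, equal lengths, disjoint length classes) are just the details the paper leaves implicit.
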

\begin{proof}
We have $v,w,u,u'\in\Sigma^*$ and $u_1\not\sim u_2\in\Sigma^*$ such that
$S\prodstar uAu'$, $A\prodstar v$, $A\prodstar wAu_1$ and $A\prodstar wAu_2$.
Let \[L_i=uw^{2i}v(u_1u_2 + u_2u_1)^iu'.\]  Then $L_i$ is an antichain and
$\bigcup_{i=1}^\infty L_i$ is an exponential antichain family.
\end{proof}

\begin{lemmarep}Let $G$ be a CFG over $\Sigma$ such that $L_A(G)$ and $R_{A,w}(G)$
are chains for all nonterminals $A$ and all $w\in\Sigma^*$.  Then $\LL(G)$ has
polynomial antichain growth.
\end{lemmarep}
\begin{proof}
We proceed by induction on the number of nonterminals which appear on the right
hand side of productions in $G$.  Let $A$ be a nonterminal, and let $G'$ be the
CFG obtained from $G$ by deleting all productions mentioning $A$ on the
right hand side and changing the starting symbol to $A$.  Let $L'=\LL(G')$.  Then
by the inductive hypothesis $L'$ has polynomial antichain growth; say any
antichain family $L\subseteq L'$ has $|L|_{\leq k} < Ck^N$ for some fixed $C,N$.
If $A$ is not the starting symbol, let $G''$ be the CFG obtained from $G$ by
deleting all productions mentioning $A$, and let $L''=\LL(G'')$ (otherwise let
$L'' =\emptyset$).  By the inductive hypothesis $L''$ also has polynomial
antichain growth.  Now
we have 
\[\LL(G)\subseteq L'' \cup \left(L_A(G)L'\bigcup_{w\in\Sigma^*} R_{A,w}\right).\]

By Lemma \ref{lem:finiteunion} it suffices to prove that
$\LL(G) \setminus L''$ has polynomial antichain growth.  

Let
$L\subseteq \LL(G) \setminus L''$ be an antichain family.  Now since $L_A(G)$ is a chain
and $L_{=k}$ is an antichain, and morever every element of $L_{=k}$ is in 
$wL'R_{A,w}$ for some $w$, we have 
\[L_{=k}\subseteq \bigcup_{i=0}^k w_i L' R_{A,w_i},\]
for some $w_0 < w_1 < w_2 <\ldots < w_k$ with $|w_k|=k$ (recall that $<$ is defined
on $\Sigma^*$ as meaning strict prefix).  

Since $R_{A,w_i}$ is a chain and $L_{=k}$ is an antichain we cannot have $w_ilu,
w_ilu'\in L_{=k}$ for any $l\in L'$ and $u\neq u'\in R_{A,w_i}$.  Hence for each
$i$ there exists some function $\phi$ and $\widetilde{L}\subseteq L'$ such that 
\[L_{=k}\cap w_iL'R_{A,w_i} = \{w_i l \phi(l)|l\in \widetilde{L}\}.\]
Now since $L_{=k}$ is an antichain we have that $\widetilde{L}$ is a
quasiantichain and in particular an antichain family, and since also
$\widetilde{L}\subseteq L'_{\leq k}$ we have that $|\widetilde{L}| < Ck^N$.
Hence \[|L_{=k}\cap w_iL'R_{A,w_i}| \leq |\widetilde{L}| < Ck^N,\]
and so 
\[|L_{=k}| < (k+1)Ck^N < Ck^{N+2}\]
for sufficiently large $k$.
\end{proof}

Combining these three lemmas gives:

\begin{theorem}\label{thm:cfdichot}Let $L$ be a context-free language.  Then either $L$ has
exponential antichain growth or $L$ has polynomial antichain growth.
\end{theorem}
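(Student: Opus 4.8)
\textbf{Proof proposal for Theorem \ref{thm:cfdichot}.}

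The plan is simply to combine the three preceding lemmas, observing that the hypotheses cover all possibilities. Let $G$ be a CFG generating $L$ with all nonterminals bireachable (which we may assume by the convention stated above, after discarding any useless nonterminals). The key dichotomy is on whether \emph{every} relevant set is a chain. First I would consider the case where $L_A(G)$ fails to be a chain for some nonterminal $A$: then the first lemma of this section gives an exponential antichain in $\LL(G)$, so $L$ has exponential antichain growth. Next, assuming every $L_A(G)$ is a chain, I would ask whether $R_{A,w}(G)$ is a chain for every nonterminal $A$ and every $w\in\Sigma^*$: if some $R_{A,w}(G)$ is not a chain, the second lemma produces an exponential antichain family in $\LL(G)$, so again $L$ has exponential antichain growth. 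Finally, if $L_A(G)$ is a chain for all $A$ and $R_{A,w}(G)$ is a chain for all $A$ and all $w$, the third lemma gives that $\LL(G)$ has polynomial antichain growth. Since these three cases are exhaustive, exactly one of the two alternatives in the statement holds.

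The only mild subtlety is that the two alternatives are genuinely mutually exclusive: a language with exponential antichain growth contains an exponential antichain family, which by definition is not polynomial, so it cannot also have polynomial antichain growth (under which every antichain family is polynomial). I would note this briefly for completeness, though the theorem as stated only asserts that at least one alternative holds, so strictly the case analysis above suffices. I do not anticipate any real obstacle here — all the work has been done in the three lemmas — so the proof is essentially a one-line combination, and the entire content of this section lies in those lemmas rather than in their assembly.
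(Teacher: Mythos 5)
Your proposal matches the paper exactly: the paper's proof of Theorem \ref{thm:cfdichot} is literally ``Combining these three lemmas gives'' the result, i.e.\ the same case analysis on whether every $L_A(G)$ and every $R_{A,w}(G)$ is a chain that you spell out. Your additional remark on mutual exclusivity is fine but not needed for the statement as given.
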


It is a straightforward exercise to show that the ambiguity of an NFA 
(the maximum number of accepting paths corresponding to a given word) can be 
represented as the width of a suitable context-free language, and hence Theorem 
\ref{thm:cfdichot} implies the well-known result that the ambiguity of an NFA 
has either polynomial or exponential growth (see Theorem 4.1 of 
\cite{weber1991ambiguity}).

We now show that the problem of distinguishing the two cases of antichain growth 
is undecidable for context-free languages, by
reduction from the CFG intersection emptiness problem.  In fact, it is
undecidable even to determine whether a given CFG generates a chain.

\begin{definition}{\sc CFG-Intersection} is the problem of determining whether two given
CFGs have non-empty intersection.  {\sc CFG-Chain} is the problem of
determining whether the language generated by a given CFG is a chain.  {\sc
CFG-ExpAntichain} is the problem of determining whether the language
generated by a given CFG has exponential antichain growth.
\end{definition}

\begin{lemma}{\sc CFG-Intersection} is undecidable.\end{lemma}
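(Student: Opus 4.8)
The plan is to prove that {\sc CFG-Intersection} is undecidable by a standard reduction from Post's Correspondence Problem (PCP), following the classical argument. First I would recall the setup: an instance of PCP is a finite list of pairs $(x_1,y_1),\ldots,(x_k,y_k)$ with $x_i,y_i\in\Gamma^*$ for some alphabet $\Gamma$, and a solution is a nonempty sequence $i_1,\ldots,i_m$ of indices with $x_{i_1}\cdots x_{i_m}=y_{i_1}\cdots y_{i_m}$. The reduction associates to such an instance two context-free grammars over the alphabet $\Gamma\cup\{1,\ldots,k\}$ (treating the indices as fresh letters).

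The key construction is as follows. Let $L_X$ be the language of all words of the form $x_{i_1}x_{i_2}\cdots x_{i_m}\,i_m i_{m-1}\cdots i_1$ for $m\geq 1$ and indices $i_j\in\{1,\ldots,k\}$, and similarly let $L_Y$ consist of all words $y_{i_1}y_{i_2}\cdots y_{i_m}\,i_m i_{m-1}\cdots i_1$. Each of these is generated by a simple linear (hence context-free) grammar: for $L_X$ take productions $A\to x_i A\, i$ for each $i$, together with $A\to x_i\, i$; and symmetrically for $L_Y$. Now $L_X\cap L_Y\neq\emptyset$ if and only if there is a sequence of indices producing the same $\Gamma$-word from the $x$'s as from the $y$'s with a matching index-suffix — which is precisely a PCP solution. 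Thus {\sc CFG-Intersection} decides PCP, and since PCP is undecidable, so is {\sc CFG-Intersection}.

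I would then spell out the two directions of the correctness equivalence: if $i_1,\ldots,i_m$ is a PCP solution then $x_{i_1}\cdots x_{i_m}\,i_m\cdots i_1 = y_{i_1}\cdots y_{i_m}\,i_m\cdots i_1$ lies in both languages; conversely any common word, written in the form forced by the grammar for $L_X$, has a uniquely determined index-suffix (read off from the trailing letters in $\{1,\ldots,k\}$, reversing it to recover $i_1,\ldots,i_m$), and the fact that the same word is in $L_Y$ forces $x_{i_1}\cdots x_{i_m}=y_{i_1}\cdots y_{i_m}$; the reversal of the index block is the usual trick ensuring the decomposition is unambiguous even when the $x_i$ are not prefix-free. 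Finally I would note that the reduction is plainly computable, completing the proof.

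The main obstacle — really the only subtle point — is arguing that a word in $L_X\cap L_Y$ genuinely encodes a PCP solution, i.e.\ that the index block at the end of the word is read the same way by both grammars and uniquely determines the sequence of pairs. This is handled by the reversed-suffix encoding: the trailing maximal block over $\{1,\ldots,k\}$ is common to the two decompositions and, read in reverse, pins down $i_1,\ldots,i_m$ uniquely, after which the prefixes over $\Gamma$ must coincide as words, giving exactly $x_{i_1}\cdots x_{i_m}=y_{i_1}\cdots y_{i_m}$. Everything else is routine.
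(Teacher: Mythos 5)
Your proof is correct. The paper itself does not give an argument at all: it disposes of this lemma by citing Ginsburg's book (Theorem 4.2.1 of \emph{The Mathematical Theory of Context-Free Languages}), where essentially the same classical reduction is carried out. What you have written is the standard textbook proof: encode a PCP instance $(x_1,y_1),\ldots,(x_k,y_k)$ by the two linear languages $L_X=\{x_{i_1}\cdots x_{i_m}\,i_m\cdots i_1 : m\geq 1\}$ and $L_Y=\{y_{i_1}\cdots y_{i_m}\,i_m\cdots i_1 : m\geq 1\}$ over $\Gamma\cup\{1,\ldots,k\}$, and observe that a common word exists iff the PCP instance has a solution. Your treatment of the only delicate point is right: because the index letters are disjoint from $\Gamma$, the split of a common word into its $\Gamma$-prefix and index-suffix is unique, the reversed index block pins down $i_1,\ldots,i_m$, and membership in $L_X$ and $L_Y$ then forces $x_{i_1}\cdots x_{i_m}=y_{i_1}\cdots y_{i_m}$. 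The reduction is clearly computable, so undecidability of PCP transfers. Compared with the paper, your version buys self-containedness (no appeal to an external theorem) at the cost of a little length; the paper's citation buys brevity, since this is a well-known result whose proof is not the point of the section. Either is acceptable; there is no gap in your argument.
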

\begin{proof}\cite{ginsburg1966mathematical}, Theorem 4.2.1.
\end{proof}

\begin{lemma}There is a polynomial time reduction from {\sc CFG-Intersection} to
{\sc CFG-Chain}.
\end{lemma}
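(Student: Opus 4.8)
The plan is to reduce {\sc CFG-Intersection} to {\sc CFG-Chain} by taking two grammars $G_1, G_2$ over a common alphabet $\Sigma$ and building a single grammar $G$ whose language is a chain if and only if $L(G_1)\cap L(G_2)=\emptyset$. The idea is to make $G$ generate, for each $w\in L(G_1)\cap L(G_2)$, a pair of incomparable words; if the intersection is empty then $G$ generates only a chain (indeed, possibly something trivial).

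First I would fix two fresh incomparable letters. Work over an alphabet $\Sigma\cup\{c,d\}$ where $c$ and $d$ are new and incomparable to each other (and we may as well make them incomparable to everything, or minimal — the details won't matter). The key gadget: given $w\in\Sigma^*$, the two words $wc$ and $wd$ are incomparable (they share the common prefix $w$ and then differ by incomparable letters), while for $w\neq w'$ with $w\sim w'$ the words $wc$ and $w'c$ remain comparable, and any single word forms a trivial chain. So I would construct $G$ so that $L(G) = \{wc : w\in L(G_1)\}\cup\{wd : w\in L(G_2)\}$. Concretely, introduce a new start symbol $S$ with productions $S\to S_1 c \mid S_2 d$, where $S_i$ is (a renamed copy of) the start symbol of $G_i$; this is clearly a polynomial-time construction and preserves bireachability after the usual cleanup.

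Next I would verify the correctness of the reduction in both directions. If $L(G_1)\cap L(G_2)=\emptyset$: I need to check that $L(G)$ is a chain. Take any two distinct words $x,y\in L(G)$. If both end in $c$, say $x=wc$, $y=w'c$ with $w,w'\in L(G_1)$ and $w\neq w'$, then by the Prefixing Lemma $x\sim y$ iff $w\sim w'$ — hmm, this is not automatic, so the construction needs more care. The fix is to also ensure the $\Sigma$-parts are forced to be comparable, which they are not in general. The cleaner approach: instead encode the bad pair only when the intersection is witnessed. Reconsidering: let $G$ generate $\{ucu : u\in L(G_1)\}\cup\{udu : u\in L(G_2)\}$ is still not obviously a chain off-witness. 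The genuinely robust construction is: $L(G) = \{ w c w^{R} : w \in L(G_1)\} \cup \{ w d w^{R} : w\in L(G_2)\}$ where $w^R$ is the reversal (so that distinct $w,w'$ of the same length give words that first differ \emph{before} the marker, hence at a position where, if they are comparable, the whole words are comparable via that position). Actually the simplest fully correct choice: pad so all words of $L(G_i)$ have the marker at a position depending only on length, and use a fresh \emph{unique} separator so that two words with different $\Sigma$-content before the marker are incomparable exactly when the $\Sigma$-content is incomparable. Since $\Sigma$-content comparability is preserved by common-prefix arguments, one shows $L(G)$ is a chain iff there is no $w\in L(G_1)\cap L(G_2)$, because such a $w$ yields $wcw^R\not\sim wdw^R$, while absent such $w$ any two words either differ in the $\Sigma$-prefix (handled by the partial order on $\Sigma$, and one arranges $G$ so that in fact each $G_i$ can be taken to produce a chain — e.g. by first intersecting with a chain language, or by noting we only need the \emph{constructed} $G$ to have the stated property, so we may WLOG assume $L(G_1), L(G_2)$ are themselves chains by the standard trick of replacing $\Sigma$ with a totally ordered copy, which does not affect emptiness of intersection) or differ only at the marker, giving the witness case.

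The main obstacle, and where I would spend the most care, is exactly this: ensuring that $L(G)$ is a chain \emph{whenever} the intersection is empty, not just that a witness produces an antichain. The trick of working over a \emph{totally ordered} copy of $\Sigma$ resolves it: over a total order, $L(G_1)$ and $L(G_2)$ are automatically chains, so the only possible incomparability in $\{wc : w\in L(G_1)\}\cup\{wd: w\in L(G_2)\}$ comes from a pair $wc, w'd$; and $wc\not\sim w'd$ forces (reading left to right) $w$ and $w'$ to be prefix-comparable with one a prefix of the other up to the marker position — taking the marker position to be $|w|$ via length-padding, $wc\not\sim w'd$ with $|w|=|w'|$ forces $w=w'\in L(G_1)\cap L(G_2)$. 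Replacing $\Sigma$ by a total order is legitimate because CFG-Intersection is undecidable already for grammars over a two-letter alphabet with any order, and relabelling letters to a totally ordered alphabet is a bijection on words preserving membership and hence preserving emptiness of the intersection. Once that design choice is made, the verification is a short case analysis using only the Prefixing Lemma and the definition of the lexicographic order, and the construction is evidently polynomial time.
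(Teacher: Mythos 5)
Your final construction is, in substance, the paper's: put a total (linear) order on $\Sigma$, adjoin two fresh mutually incomparable marker letters, and take $L(G_1)c\cup L(G_2)d$, so that a witness $w$ in the intersection yields the incomparable pair $wc\not\sim wd$. However, there is a genuine gap in the other direction (empty intersection $\Rightarrow$ chain). The step ``over a total order $L(G_1)$ and $L(G_2)$ are automatically chains, so the only possible incomparability comes from a pair $wc,w'd$'' is not valid: appending a common letter does not preserve the chain property (the Postfixing Lemma only goes the other way), because of prefix pairs. If $c$ is incomparable to the letters of $\Sigma$ --- one of the two options you explicitly permit with ``the details won't matter'' --- then for any prefix pair $w$, $w'=wv$ in $L(G_1)$ the words $wc$ and $w'c$ first differ at position $|w|+1$, where $c$ meets a letter of $\Sigma$, so $wc\not\sim w'c$ and the constructed language fails to be a chain even though $L(G_1)\cap L(G_2)=\emptyset$; the same failure occurs for cross pairs $wc$, $w'd$ with $w$ a proper prefix of $w'$. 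So with that choice the reduction is simply wrong, and your attempted patch via ``length-padding'' is not developed and cannot work as stated: the final construction does no padding, and one cannot equalize marker positions across infinitely many words of different lengths.

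The missing detail is precisely what the paper's proof specifies: make each marker comparable to \emph{every} letter of $\Sigma$ while keeping the two markers incomparable to each other (the paper takes $\Sigma<0$ and $\Sigma<1$; making them minimal instead of maximal also works, which is the other option you mention). With that choice, whenever two words of $L(G_1)c\cup L(G_2)d$ first differ before reaching a marker they differ on totally ordered $\Sigma$-letters and are comparable; when one word's marker meets a $\Sigma$-letter of the other (the prefix case) they are again comparable; the only way to be incomparable is marker against marker at the same position, which forces $w=w'\in L(G_1)\cap L(G_2)$. With this single fix your argument becomes the paper's proof; without it, the verification of the ``chain'' direction has a real hole, and your exploratory detours (reversals $wcw^R$, intersecting with chain languages) are unnecessary once the order between the markers and $\Sigma$ is pinned down.
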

\begin{proof}
Let $G_1,G_2$ be arbitrary CFGs over alphabet $\Sigma$.  Let
$\widetilde{\Sigma}=\Sigma\cup\{0,1\}$, with an arbitrary linear order on
$\Sigma$, and $\Sigma < 0, \Sigma < 1$ but $0$ and $1$ incomparable.  Let
$\widetilde{G}$ be a CFG such that 
\[\LL(\widetilde{G}) = (\LL(G_1)0)\cup (\LL(G_2)1)\]
(which can trivially be constructed with polynomial blowup).  Then
$\LL(\widetilde{G})$ is a chain if and only if $G_1\cap G_2=\emptyset$.
\end{proof}

\begin{lemma}\label{lem:pfkleene}Let $L$ be a prefix-free chain.  Then $L^*$ is a chain.
\end{lemma}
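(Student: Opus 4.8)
The plan is to prove that every pair of words in $L^*$ is comparable by strong induction on the sum of their lengths. Before starting the induction I would dispose of the degenerate case $\epsilon\in L$: since $\epsilon$ is a proper prefix of every non-empty word, prefix-freeness forces $L=\{\epsilon\}$, and then $L^*=\{\epsilon\}$ is trivially a chain. So I may assume from now on that every element of $L$ is non-empty, which is what guarantees the induction measure genuinely strictly decreases when we split off a block.

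For the inductive step, let $w_1,w_2\in L^*$. If either is empty then $w_1\sim w_2$ is immediate from clause (i) of Definition \ref{def:lex}. Otherwise write $w_1=aw_1'$ and $w_2=bw_2'$ with $a,b\in L$ and $w_1',w_2'\in L^*$; since $L$ is a chain we have $a\sim b$. I then split into two cases. If $a=b$, then $|w_1'|+|w_2'|<|w_1|+|w_2|$, so the inductive hypothesis gives $w_1'\sim w_2'$, and the Prefixing Lemma lifts this to $aw_1'\sim aw_2'$, i.e. $w_1\sim w_2$. If $a\neq b$, then because $L$ is prefix-free neither of $a,b$ is a prefix of the other, so the Concatenation Lemma applies with $a,b$ in the roles of its $w_1,w_2$ and yields $aw_1'\sim bw_2'$ directly from $a\sim b$. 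In both cases $w_1\sim w_2$, completing the induction.

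The only real subtlety — and the reason the hypothesis of prefix-freeness (rather than merely being a chain) is needed — is the interaction of the two leading blocks $a$ and $b$ when $a\neq b$. Without prefix-freeness one could have $a<b$, and then the lexicographic comparison of $aw_1'$ against $bw_2'$ would be decided by comparing $w_1'$ against the suffix of $b$ past $a$, over which we have no control; this is precisely the phenomenon that prevents Kleene star from preserving the antichain property (Lemma \ref{lem:kleene} only delivers a quasiantichain). Prefix-freeness eliminates this case outright, so the argument reduces cleanly to the already-established Prefixing and Concatenation Lemmas, and I do not expect any genuine obstacle beyond phrasing the degenerate case carefully enough that the induction measure is strictly decreasing.
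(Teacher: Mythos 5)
Your proof is correct and is essentially the paper's argument: the paper takes a minimum-length counterexample $lw\not\sim l'w'$ with $l,l'\in L$, uses the Prefixing Lemma to rule out $l=l'$ and the Concatenation Lemma (via prefix-freeness) to contradict $L$ being a chain, which is just the contrapositive packaging of your induction on total length. Your explicit handling of the degenerate case $\epsilon\in L$ is a small extra care the paper leaves implicit, but otherwise the two proofs coincide.
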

\begin{proof}
Let $lw\not\sim l'w'$ be a minimum-length counterexample with $l,l'\in L$ and
$w,w'\in L^*$.  By minimality and the Prefixing Lemma we have that $l\neq l'$.  
Then by the Concatenation Lemma since $L$ is prefix-free we have
that $l\not\sim l'$, which is a contradiction.
\end{proof}

\begin{lemma}There is a polynomial time reduction from {\sc CFG-Chain} to
{\sc CFG-ExpAntichain}.
\end{lemma}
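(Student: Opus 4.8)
The plan is to reduce {\sc CFG-Chain} to {\sc CFG-ExpAntichain} by taking a grammar $G$ and building a new grammar $\widehat{G}$ whose language is $\widehat{L}=(\LL(G)\#)^*$ for a fresh separator symbol $\#$ that is made incomparable to everything in $\Sigma$ (and we may linearly order $\Sigma$ arbitrarily). The point is that $\#$ being incomparable to all original letters makes every nonempty word of $\LL(G)\#$ pairwise incomparable with every \emph{different} such word precisely when $\LL(G)$ is not a chain, while the trailing $\#$ makes $\LL(G)\#$ prefix-free regardless. So we want to show: $\LL(G)$ is a chain if and only if $\widehat{L}$ has polynomial (in fact very slow) antichain growth, equivalently $\widehat{L}$ does \emph{not} have exponential antichain growth. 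This construction is clearly polynomial-time.

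First I would handle the easy direction. If $\LL(G)$ is a chain, then since $\Sigma<\#$, the set $\LL(G)\#$ is a chain as well, and it is prefix-free because every word ends in $\#$ and $\#$ occurs nowhere else (so no word is a prefix of another unless they are equal). By Lemma \ref{lem:pfkleene}, $\widehat{L}=(\LL(G)\#)^*$ is then a chain, so its only antichains are singletons; hence $\widehat{L}$ has polynomial antichain growth and in particular does not have exponential antichain growth. For the converse, suppose $\LL(G)$ is not a chain, so there are $w_1,w_2\in\LL(G)$ with $w_1\not\sim w_2$; then $w_1\#\not\sim w_2\#$, and since $\#$ is incomparable to every letter of $\Sigma$ the two words $w_1\#$ and $w_2\#$ satisfy the hypotheses $w_1\#\not\leq w_2\#$ and $w_2\#\not\leq w_1\#$ of the Concatenation Lemma, so $\{w_1\#, w_2\#\}$ is in fact a two-element antichain. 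By the Kleene star Lemma $(w_1\#+w_2\#)^*$ is a quasiantichain, it has exponential growth, and it is contained in $\widehat{L}$, so by Lemma \ref{lem:expquas} there is an exponential antichain inside $\widehat{L}$; thus $\widehat{L}$ has exponential antichain growth.

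Combining the two directions: $\widehat{L}$ has exponential antichain growth iff $\LL(G)$ is not a chain, which (negating) gives the reduction from {\sc CFG-Chain} to {\sc CFG-ExpAntichain}; composing with the previously established reduction from {\sc CFG-Intersection} then yields undecidability of {\sc CFG-ExpAntichain}, i.e. Theorem \ref{thm:undecidable}. The only mild subtlety — and the step I would be most careful about — is making sure the incomparability of $\#$ is genuinely used twice in the right way: once to get prefix-freeness and chain-ness of $\LL(G)\#$ when $\LL(G)$ is a chain (here we also lean on $\Sigma<\#$ so that comparabilities within $\LL(G)$ lift to $\LL(G)\#$), and once, in the negative case, to upgrade $w_1\#\not\sim w_2\#$ from merely "the lexicographic incomparability we already have" to a full antichain pair via the Concatenation Lemma, so that the Kleene-star quasiantichain machinery applies. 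Everything else is a direct appeal to Lemmas \ref{lem:kleene}, \ref{lem:expquas}, and \ref{lem:pfkleene} together with the grammar manipulations, which are routine.
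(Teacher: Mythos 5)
Your argument, in both directions, is essentially the paper's proof (the paper appends a fresh symbol $0$ with $\Sigma<0$, takes $(\LL(G)0)^*$, and uses Lemma \ref{lem:pfkleene} one way and Lemmas \ref{lem:kleene} and \ref{lem:expquas} the other way), but the construction you declare at the outset contradicts the argument you then give, and taken literally it is wrong in two places. First, you say $\#$ is ``made incomparable to everything in $\Sigma$''. With that choice the easy direction fails: take $\LL(G)=\{a,ab\}$, a chain (one word is a prefix of the other). Then $a\#$ and $ab\#$ first differ at $\#$ versus $b$, which are incomparable, and neither is a prefix of the other, so they form a two-element antichain; by exactly the machinery you invoke, $(a\#+ab\#)^*\subseteq(\LL(G)\#)^*$ then contains an exponential antichain. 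So a yes-instance of {\sc CFG-Chain} would map to a yes-instance of {\sc CFG-ExpAntichain}, breaking the reduction. What makes the easy direction work --- and what you actually use when you write ``since $\Sigma<\#$'' --- is the opposite choice: put every letter of $\Sigma$ strictly below $\#$. Then a prefix pair $w_1 < w_1v$ in $\LL(G)$ yields comparable words $w_1v\#$ and $w_1\#$ (their first difference is a letter of $\Sigma$ against $\#$), so $\LL(G)\#$ is a prefix-free chain and Lemma \ref{lem:pfkleene} applies. Second, the parenthetical ``we may linearly order $\Sigma$ arbitrarily'' must be deleted: the partial order on $\Sigma$ is part of the {\sc CFG-Chain} instance and has to be preserved, since under a linear order on the alphabet the lexicographic order is total and \emph{every} language is a chain, trivialising the very property the reduction must track. (That clause belongs to the previous reduction, from {\sc CFG-Intersection} to {\sc CFG-Chain}, where the input grammars carry no order.)

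With those two corrections --- keep the given order on $\Sigma$ and set $\Sigma<\#$ --- your proof is the paper's proof. The hard direction is fine as you give it, except that the appeal to ``$\#$ is incomparable to every letter of $\Sigma$'' is not what justifies $w_1\#\not\leq w_2\#$ and $w_2\#\not\leq w_1\#$: that holds simply because $\#$ is a fresh end marker, and in any case the Concatenation Lemma should be applied to $w_1,w_2$ themselves (its hypotheses $w_1\not\leq w_2$, $w_2\not\leq w_1$ follow from $w_1\not\sim w_2$, since prefixes are always comparable), giving $w_1\#\not\sim w_2\#$ and hence an antichain $\{w_1\#,w_2\#\}$, after which Lemmas \ref{lem:kleene} and \ref{lem:expquas} finish as you say.
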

\begin{proof}
Let $G$ be a CFG over a partially ordered alphabet $\Sigma$.  Let
$\widetilde{\Sigma} = \Sigma\cup\{0\}$, with $\Sigma<0$.  Let $\widetilde{G}$ be
a CFG such that 
\[\LL(\widetilde{G})=(\LL(G)0)^*.\]
We claim that $\LL(\widetilde{G})$ has exponential antichain growth if and only if
$\LL(G)$ is not a chain.  Indeed, suppose that $l_1\not\sim l_2\in \LL(G)$.  Then $l_10\not\sim l_20$ and so by
Lemmas \ref{lem:kleene} and $\ref{lem:expquas}$ we have that
$(l_10+l_20)^*\subseteq \LL(\widetilde{G})$ contains an exponential antichain.

Conversely, suppose that $\LL(G)$ is a chain.  Then $\LL(G)0$ is a prefix-free
chain and so by Lemma \ref{lem:pfkleene} we have that $\LL(\widetilde{G})$ is a
chain.
\end{proof}

Combining these lemmas gives:

\begin{theorem}\label{thm:undecidable}
The problems {\sc CFG-Chain} and {\sc CFG-ExpAntichain} are undecidable.
\end{theorem}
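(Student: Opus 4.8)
The plan is to prove Theorem \ref{thm:undecidable} by chaining together the reductions established in the preceding three lemmas, so that the work is essentially one of assembling components that have already been checked. Concretely, the first step is to recall that {\sc CFG-Intersection} is undecidable by the cited result of Ginsburg (Theorem 4.2.1 of \cite{ginsburg1966mathematical}). The second step is to invoke the polynomial-time reduction from {\sc CFG-Intersection} to {\sc CFG-Chain}: given CFGs $G_1,G_2$ over $\Sigma$, one builds $\widetilde{G}$ with $\LL(\widetilde{G}) = (\LL(G_1)0)\cup(\LL(G_2)1)$ over $\widetilde\Sigma = \Sigma\cup\{0,1\}$, where $0,1$ are made incomparable to each other and above everything in $\Sigma$; a word $w0$ and a word $w'1$ are comparable only if one is a prefix of the other, which (since both end in the maximal-but-incomparable symbols $0,1$) forces $w=w'$, so $\LL(\widetilde G)$ fails to be a chain exactly when $\LL(G_1)\cap\LL(G_2)\neq\emptyset$. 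Hence {\sc CFG-Chain} is undecidable.

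The third step is to push this further to {\sc CFG-ExpAntichain} via the polynomial-time reduction from {\sc CFG-Chain}: given $G$ over $\Sigma$, form $\widetilde G$ with $\LL(\widetilde G) = (\LL(G)0)^*$ over $\Sigma\cup\{0\}$ with $\Sigma<0$. If $\LL(G)$ is not a chain, witnessed by $l_1\not\sim l_2$, then $l_10\not\sim l_20$ and $(l_10+l_20)^*\subseteq\LL(\widetilde G)$ is an exponential quasiantichain, which by Lemmas \ref{lem:kleene} and \ref{lem:expquas} yields an exponential antichain inside $\LL(\widetilde G)$. Conversely, if $\LL(G)$ is a chain, then $\LL(G)0$ is a prefix-free chain (the trailing $0$ is maximal in $\widetilde\Sigma$, so no word of $\LL(G)0$ can be a proper prefix of another), and Lemma \ref{lem:pfkleene} gives that $\LL(\widetilde G) = (\LL(G)0)^*$ is itself a chain, hence has constant — in particular polynomial — antichain growth. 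So $\LL(\widetilde G)$ has exponential antichain growth iff $\LL(G)$ is not a chain, and {\sc CFG-ExpAntichain} is undecidable.

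Since the three reductions are each polynomial-time (and in particular computable), composing them transfers undecidability from {\sc CFG-Intersection} to both {\sc CFG-Chain} and {\sc CFG-ExpAntichain}, which is exactly the statement of the theorem. There is no real obstacle here: the substance of the argument lives entirely in the three lemmas, whose correctness rests on the order-theoretic bookkeeping (maximality and incomparability of the padding symbols $0,1$) together with the structural lemmas on Kleene star (Lemmas \ref{lem:kleene}, \ref{lem:pfkleene}) and the quasiantichain-to-antichain passage (Lemma \ref{lem:expquas}). The only point worth double-checking when writing it up is that the closure operations used to build each $\widetilde G$ — finite union with suffixing, and Kleene star of a suffixed language — are effective with polynomial blowup on CFGs, which is standard. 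The main thing to get right in exposition is to state clearly that undecidability is inherited backwards along a computable reduction, so that the final "Combining these lemmas gives" is genuinely a one-line deduction.
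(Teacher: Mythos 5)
Your proposal is correct and follows exactly the paper's route: the theorem is obtained by composing the undecidability of {\sc CFG-Intersection} with the two polynomial-time reductions (to {\sc CFG-Chain}, then to {\sc CFG-ExpAntichain}) established in the preceding lemmas, which is precisely how the paper concludes (``Combining these lemmas gives''). The extra detail you supply about the padding symbols and the Kleene-star lemmas simply restates the content of those lemmas and does not change the argument.
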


\section{Tree automata}\label{sec:tree}

In this section, we generalise the definition of the lexicographic ordering to
tree languages, and prove a trichotomy theorem: regular tree languages have
antichain growth which is either polynomial, exponential or doubly exponential.

Notation and definitions (other than for the lexicographic ordering) are taken
from \cite{tata2007}, to which the reader is referred for a more detailed
treatment.  

\begin{definition}Let $\FF$ be a finite set of function symbols of arity $\geq 0$, and
$\XX$ a set of variables.  Write $\FF_p$ for the set of function symbols of
arity $p$.  Let $T(\FF,\XX)$ be the set of terms over $\FF$ and
$\XX$.   Let $T(\FF)$ be the set of \emph{ground terms} over $\FF$, which is
also the set of \emph{ranked ordered trees} labelled by $\FF$ (with rank given
by arity as function symbols).  
\end{definition}

For example, the set of ordered binary trees is $T(\FF)$, where $\FF=\{f,g,c\}$
and $f$ has arity 2, $g$ arity 1 and $c$ arity 0.

Note that this generalises the definition of finite words over an alphabet $\Sigma$,
by taking $\FF=\Sigma\cup\{\epsilon\}$, giving each $a\in\Sigma$ arity one and
$\epsilon$ arity zero.

A term $t$ is \emph{linear} if no free variable appears more than once in $t$.
A linear term mentioning $k$ free variables is a \emph{$k$-ary context}.

\begin{definition}
Let $\FF$ be equipped with a partial order $\preceq$.  Then the lexicographic
partial order induced by $\preceq$ on $T(\FF)$ is the relation $\preceq$ defined as
follows: for any $f\in\FF_p,f'\in\FF_q$ and any $t_1,\ldots,t_p\in T(\FF)$ and
$t'_1,\ldots,t'_q\in T(\FF)$ we have $f(t_1,\ldots,t_p)\preceq
f'(t'_1,\ldots,t'_q)$ if and only if either $f \prec f'$ or $f=f'$ and $t_i\preceq
t'_i$ for all $i$.
\end{definition}

Note that this generalises Definition \ref{def:lex}, by taking $\epsilon\preceq a$
for all $a\in\Sigma$.  As before we will write $t \sim t'$ if $t, t'\in T(\FF)$ are related by the
lexicographic order; the definitions of \emph{chain} and \emph{antichain} are as
before.   To quantify antichain growth we need a notion of the size of a tree.
The measure we will use will be \emph{height}:

\begin{definition}The \emph{height} function $h:T(\FF,\XX)\rightarrow
\mathbb{N}$ is defined by $h(x) = 0$ for all $x\in\XX$, $h(t) = 1$ for all
$t\in\FF_0$ and $h(t(t_1,\ldots,t_n)) = 1 + \max(h(t_1,\ldots,t_n))$ for all
$t\in \FF_n$ ($n\geq 1$) and $t_1,\ldots,t_n\in T(\FF,\XX)$.  For a language
$L$, the set $\{t\in L\mid h(t)=k\}$ is denoted $L_{=k}$.
\end{definition}

For example, taking the earlier example of binary trees, ground terms of height
3 include $f(f(c,c),f(c,c))$, $f(c,f(c,c))$ and $g(f(c,c))$.

We say that $L$ has \emph{doubly exponential} antichain growth if there is some
$\epsilon$ such that the maximum size antichain in $L_{=n}$ exceeds
$2^{2^{\epsilon n}}$ infinitely often.

\begin{definition}
A \emph{nondeterministic finite tree automaton} (NFTA) over $\FF$ is a tuple
$\Aa=(Q,\FF,Q_f,\Delta)$ where $Q$ is a set of unary states, $Q_f\subseteq Q$ is
a set of final states, and $\Delta$ a set of transition rules of type 
\[f(q_1(x_1),\ldots,q_n(x_n)) \rightarrow q(f(x_1,\ldots,x_n)),\]
for $f\in\FF_n$, $q,q_1,\ldots,q_n\in Q$ and $x_1,\ldots,x_n\in\XX$.  The \emph{move
relation} $\arel$ is defined by applying a transition rule possibly inside a
context and possibly with substitutions for the $x_i$.  The reflexive transitive
closure of $\arel$ is denoted $\arels$.

A tree $t\in T(\FF)$ is \emph{accepted} by $\Aa$ if there is some $q\in Q_f$
such that $t\arels q(t)$.  The set of trees accepted by $\Aa$ is denoted
$\LL(\Aa)$.
\end{definition}

Again this generalises the definition of an NFA: put in transitions $\epsilon
\rightarrow q(\epsilon)$ for all accepting states $q$, $a(q(x))\rightarrow
q'(a(x))$ whenever $q\in \Delta(q',a)$, and set $Q_f$ as the initial state.

The critical idea for the proof is to find the appropriate analogue of $L_q$.
This turns out to be the set $P_q$ of binary contexts such that if the free
variables are assigned state $q$ then the root can also be given state $q$.
By analogy to the `trousers decomposition' of differential
geometry (also known as the `pants decomposition'), we refer to
such a context as a \emph{pair of trousers}.

It turns out that a sufficient condition for $L$ to have doubly exponential
antichain growth is for $P_q$ to be non-empty for some $q$ (note that this does
not depend on the particular partial order on $\Sigma$).  On the other hand, if
$P_q$ is empty for all $q$, then there is in a suitable sense no branching and
so we have a similar situation to ordinary languages.

\begin{definition}Let $\Aa = (Q,\FF,Q_f,\Delta)$ be an NFTA and $q\in Q$.  A
linear term $t\in T(\FF,\{x_1,x_2\})$ is a \emph{pair of trousers} with respect
to $q$ if $x_1,x_2$ appear in $t$ and $t[\sub{x_1}{q(x_1)}, \sub{x_2}{q(x_2)}]
\arels q(t)$.  The set of pairs of trousers with respect to
$q$ is denoted $P_q(\Aa)$.
\end{definition}

\begin{lemmarep}Let $\Aa = (Q,\FF,Q_f,\Delta)$ be a reduced NFTA.  If there exists some
$q\in Q$ such that $P_q(\Aa)$ is non-empty, then $\LL(\Aa)$ contains a doubly
exponential antichain.
\end{lemmarep}
\begin{proof}
We will clearly be done if we can find two pairs of trousers $t_1,t_2$ such that
$\sigma_1(t_1)\not\sim \sigma_2(t_2)$ for all substitutions
$\sigma_1,\sigma_2$: the set of trees built from them is of doubly exponential
size, and any two such trees are comparable only if they are constructed in
exactly the same way, i.e.  are equal.  We produce this pair by first
constructing two incomparable ground terms $s_1,s_2$ whose roots can be labelled
with state $q$.  Having done this we produce $t_1$ by attaching $s_1$ to the
left leg of our pair of trousers $t$, and a copy of $t$ to the right leg.  For
$t_2$ we do likewise but with $s_2$ in place of $s_1$.  Since $s_1\not\sim s_2$
we have that $\sigma_1(t_1)\not\sim \sigma_2(t_2)$ for all substitutions
$\sigma_1, \sigma_2$.

Let $t$ be a pair of trousers with respect to $q$ and let $s$ be a ground term
with $s\arels q(s)$.  We claim that there exist incomparable ground terms
$s_1,s_2$ with $s_i \arels q(s_i)$.

Indeed, we have that $s$ and $s' = t[\sub{x_1}{s},\sub{x_2}{s}]$ are ground
terms with $s\arels q(s)$ and $s'\arels q(s')$.  Let $s_1 =
t[\sub{x_1}{s},\sub{x_2}{s'}]$ and $s_2 = t[\sub{x_1}{s'},\sub{x_2}{s}]$.  Now
$s_1\preceq s_2$ only if $s\preceq s'$ and $s'\preceq s$, which is impossible as $s\neq
s'$ (since $h(s')>h(s)$).  Similarly we have that $s_2\not\preceq s_1$, as
required.

Hence $t_1=t[\sub{x_1}{s_1},\sub{x_2}{t}]$ and
$t_2=t[\sub{x_1}{s_2},\sub{x_2}{t}]$ are pairs of trousers with the property
that $\sigma_1(t_1)\not\sim \sigma_2(t_2)$ for all substitutions
$\sigma_1,\sigma_2$.  It is clear that a doubly exponential antichain can be
built from these.
\end{proof}

\begin{lemmarep}\label{lem:singexp}Let $\Aa = (Q,\FF,Q_f,\Delta)$ be a reduced NFTA such that
$P_q(\Aa)=\emptyset$ for all $q\in Q$.  Then $\LL(\Aa)$ has at most exponential
growth.
\end{lemmarep}
\begin{proof}
We proceed by induction on the number of states appearing on the left of
transitions.  Without loss of generality we may assume that $Q_f=\{q\}$ for some
$q$ (otherwise consider a finite union of automata).   Let $t\in \LL(\Aa)_{\leq
n}$ be any term of height at most $n$.  Say $t=f(t_1,\ldots,t_k)$ for some
function symbol $f$ and terms $t_1, \ldots, t_k$.  In any accepting run for $t$,
since the root is labelled with $q$ we have that $q$ can appear in at most one
subtree, since otherwise we obtain a pair of trousers.  Hence for all but at
most one value of $i$ we have that $t_i\in \LL(\Aa')$, where $\Aa'$ is $\Aa$
with all transitions in which $q$ appears on the left removed, which has at most
single exponential language growth by the inductive hypothesis.

Hence we have 
\[|\LL(\Aa)|_{\leq n}\leq |\FF|d|\LL(\Aa)|_{\leq n-1}
|\LL(\Aa')|_{\leq n-1}^d,\] 
where $d$ is the maximum arity of symbols in $\FF$.  Hence $\LL(\Aa)$ has at
most single exponential language growth.
\end{proof}

In the case where there are no pairs of trousers, the situation is essentially
equivalent to ordinary NFA, and so we have a further dichotomy between
exponential and polynomial antichain growth.  To show this, we define a set
equivalent to $L_{q,q}$, and show that we have polynomial growth if it is a
chain and exponential growth otherwise.

\begin{definition}Let $\Aa=(Q,\FF,Q_f,\Delta)$ be an NFTA, and $q\in Q$.  Define
$\LL_q(\Aa)\subseteq T(\FF,\{x_1\})$ to be the set of unary
contexts $t$ such that $t[\sub{x_1}{q(x_1)}]\arels q(t)$.
\end{definition}

Note that unary contexts are linear terms in which \emph{exactly} one free
variable appears, so $\LL_q(\A)$ does not contain ground terms.  Note also that
$x_1\in \LL_q(\A)$ for any $\A$.

To give meaning to the statement `$\LL_q(\A)$ is a chain', we must extend the
definition of the lexicographic order from the set $T(\FF)$ of ground terms to
the set $T(\FF,\{x_1\})$ of unary contexts.  We do this by extending the
relation $\preceq$ on $\FF$ to $\FF\cup\{x_1\}$ by $x_1\preceq f$ for all $f\in \FF$,
and extending this to the lexicographic order as before.

Note in particular we have that if $t=\sigma(t')$ for some substitution $\sigma$
then we have $t'\preceq t$; this corresponds to the notion of prefixes for words.
On the other hand, if $t'\preceq t$ then we have that either $t=\sigma(t')$ for
some $\sigma$ ($t'$ is a prefix of $t$) or otherwise that $\sigma'(t') \preceq
\sigma(t)$ for all substitutions $\sigma,\sigma'$.  Conversely, if $t\not
\sim t'$ then we have that $\sigma(t)\not\sim \sigma'(t')$ for all substitutions
$\sigma,\sigma'$; note that this does not hold for contexts of arity greater
than 1 (for a similar definition of the lexicographic order). 

\begin{lemmarep}\label{lem:polyexp}Let $\Aa=(Q,\FF,Q_f,\Delta)$ be a reduced NFTA
such that $P_q(\Aa)=\emptyset$ for all $q$.  Then $\LL(\Aa)$ has polynomial
antichain growth if $\LL_q(\Aa)$ is a chain for all $q$, and otherwise
$\LL(\Aa)$ has exponential antichain growth.
\end{lemmarep}
\begin{proof}
If $\LL_q(\A)$ is not a chain then let $t_1\not\sim t_2\in \LL_q(\A)$.  Since
$\A$ is reduced there is a ground term $t$ with $t\arels q(t)$ and a unary context $t'$
with $t'(q(x))\arels q'(t)$ for some $q'\in Q_f$.  Let the function
$\phi:\mathcal{P}(T(\FF))\rightarrow \mathcal{P}(T(\FF))$ be defined by
$\phi(X) = \left\{t_1[\sub{x_1}{s}],t_2[\sub{x_1}{s}] \middle| s\in X\right\}$,
and let $Y=\bigcup_{n=0}^{\infty} \phi^n(\{t\})$.  Then the set
$\left\{t'[\sub{x_1}{s}]\middle| s\in Y\right\} \subseteq \LL(\A)$ is an
antichain and has exponential growth.

Conversely if $\LL_q(\Aa)$ is a chain for all $q$ then an
argument similar to the upper bound in the proof of Theorem \ref{thm:dichotomy}
shows that $\LL(\Aa)$ has polynomial antichain growth.

Once again we proceed by induction on the number of states appearing on the left
of transitions, and assume without loss of generality that $Q_f=\{q\}$ for some
$q$.  Then for any $t\in \LL(\A)$ we have that $t=t'[\sub{x_1}{t''}]$ for some
$t'\in \LL_q(\A)$ and $t''\in \LL(\A')$, where $\A'$ is $\A$ with all
transitions in which $q$ appears on the left removed, which has polynomial
antichain growth by the inductive hypothesis.

For any antichain $L\subseteq \LL(\A)$, we claim that we have that \[L\subseteq
\left\{t[\sub{x_1}{t'}]\middle| t'\in \LL(\A')\right\}\] for some fixed $t\in
\LL_q(\A)$.  Indeed, supposing the contrary let $t_1 \neq t_2\in \LL_q(\A)$ be contexts
such that $t_1[\sub{x_1}{t'_1}],t_2[\sub{x_1}{t'_2}]\in L$ with $t_1\neq
\sigma(t_2), t_2\neq \sigma(t_1)$ for all substitutions $\sigma$.  Since
$\LL_q(\A)$ is a chain we have that (without loss of generality) $t_1\preceq t_2$
and since $t_1$ is not a prefix of $t_2$, we have that $\sigma_1(t_1)\preceq
\sigma_2(t_2)$ for all substitutions $\sigma_1,\sigma_2$.  In particular we have
that $t_1[\sub{x_1}{t'_1}] \preceq t_2[\sub{x_1}{t'_2}]$, which is a contradiction
since $L$ is an antichain, so the claim is proved.

Hence by induction we have that $\LL(\A)$ has polynomial antichain growth.
\end{proof}

Combining these lemmas gives 

\begin{theorem}\label{thm:treetrichot}
Let $L$ be a regular tree language over a partially ordered alphabet.  Then $L$
has either doubly exponential antichain growth, singly exponential antichain
growth, or polynomial antichain growth.
\end{theorem}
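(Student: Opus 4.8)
The plan is to assemble the trichotomy directly from the three lemmas immediately preceding the theorem statement, which together exhaust all possible cases depending on whether the sets $P_q(\Aa)$ and $\LL_q(\Aa)$ witness branching or non-chain behaviour. First I would note that, given a regular tree language $L$, we may fix a reduced NFTA $\Aa$ with $\LL(\Aa) = L$; reducedness is harmless since every regular tree language is accepted by some reduced automaton (removing non-reachable and non-co-reachable states). The argument then splits on a single dichotomy: whether there is some state $q$ with $P_q(\Aa) \neq \emptyset$.

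In the first case, where $P_q(\Aa) \neq \emptyset$ for some $q$, the first lemma of this section gives that $\LL(\Aa)$ contains a doubly exponential antichain, and hence has doubly exponential antichain growth. In the second case, where $P_q(\Aa) = \emptyset$ for all $q$, I would invoke the last lemma of the section: if $\LL_q(\Aa)$ is a chain for every $q$ then $L$ has polynomial antichain growth, and otherwise $L$ has (singly) exponential antichain growth. Since these three outcomes are exactly the three alternatives in the theorem statement, combining the lemmas closes the proof. I would also remark, for definiteness, that the second lemma of the section (the exponential upper bound on $|\LL(\Aa)|_{\leq n}$ when all $P_q$ are empty) is what ensures the `singly exponential' case really is at most singly exponential rather than merely `not doubly exponential', so that the three cases are genuinely distinct regimes.

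The proof itself is therefore a short combination; the real content is in the three lemmas, and the main obstacle — were one proving those — would be the polynomial-antichain-growth half of the final lemma, which is asserted there by analogy with the upper bound argument for Theorem \ref{thm:dichotomy}. That argument requires setting up the right induction (on height and on the number of states appearing on the left of transitions), peeling off the `root' state $q$, using that when $P_q = \emptyset$ the state $q$ propagates down at most one branch so that an accepting run decomposes as a unary context in $\LL_q(\Aa)$ applied to a subtree accepted by a smaller automaton, and then applying the chain hypothesis on $\LL_q(\Aa)$ together with the tree analogue of Lemma \ref{lem:prodpoly} to control the antichain growth of the composition. For the theorem as stated, though, no such work is needed: it is a case analysis citing the preceding lemmas.

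\begin{proof}
Let $\Aa$ be a reduced NFTA with $\LL(\Aa) = L$. If there is some $q \in Q$ with $P_q(\Aa) \neq \emptyset$, then by the first lemma of this section $L$ contains a doubly exponential antichain, so $L$ has doubly exponential antichain growth. Otherwise $P_q(\Aa) = \emptyset$ for all $q$, and then by the last lemma of this section $L$ has polynomial antichain growth if $\LL_q(\Aa)$ is a chain for all $q$, and singly exponential antichain growth otherwise (the upper bound lemma of this section guarantees that in this case the growth is at most singly exponential). In every case $L$ falls into one of the three stated regimes.
\end{proof}
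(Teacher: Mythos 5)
Your proof is correct and follows essentially the same route as the paper, which likewise obtains the theorem by directly combining the three preceding lemmas (case split on whether some $P_q(\Aa)$ is non-empty, then on whether every $\LL_q(\Aa)$ is a chain). Your additional remarks on reducedness and on the role of the exponential upper-bound lemma are sensible elaborations of what the paper leaves implicit.
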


The special case of the trivial partial order (in which elements are only
comparable to themselves) yields the fact that the language growth of any
regular tree language is either polynomial, exponential or doubly exponential,
which may not have previously appeared in the literature.

\begin{corollary}
Let $L$ be a regular tree language.  Then $L$ has either doubly exponential
language growth, singly exponential language growth or polynomial language
growth.
\end{corollary}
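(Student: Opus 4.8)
The plan is to deduce the corollary from the preceding theorem by instantiating the partial order on $\FF$ to be the trivial (discrete) order, in which $f\leq f'$ holds only when $f=f'$. First I would observe that under this order, two ground terms $t,t'\in T(\FF)$ satisfy $t\sim t'$ if and only if $t=t'$: this follows by an easy induction on height from the definition of the lexicographic order induced by $\leq$, since $f(t_1,\ldots,t_p)\RR f'(t'_1,\ldots,t'_q)$ requires $f=f'$ (as $f<f'$ is impossible) and then $t_i\leq t'_i$, i.e. $t_i=t'_i$, for all $i$. Consequently a subset of $T(\FF)$ is an antichain in the lexicographic order precisely when it is an arbitrary set of pairwise-distinct terms, so $L_{=n}$ itself is an antichain for every $n$, and the maximum antichain in $L_{=n}$ has size exactly $|L|_{=n}$. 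Hence the antichain growth of $L$ under the trivial order coincides, up to the obvious identifications, with the ordinary language growth of $L$.

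Given this identification, I would then simply apply the trichotomy theorem. If $L$ has doubly exponential antichain growth then $|L|_{=n}$ exceeds $2^{2^{\epsilon n}}$ infinitely often for some $\epsilon$, i.e. $L$ has doubly exponential language growth; if $L$ has singly exponential antichain growth then it contains an exponential antichain family, which (being a family of sets of pairwise-distinct words) witnesses $|L|_{=n}>2^{\epsilon n}$ infinitely often, so $L$ has exponential language growth; and if $L$ has polynomial antichain growth then, since $L_{=n}$ is itself an antichain family, $|L|_{=n}$ is bounded by a polynomial, so $L$ has polynomial language growth. Since the theorem guarantees one of the three antichain-growth behaviours always holds, the corresponding language-growth trichotomy follows immediately.

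The only mild subtlety — and the step I would be most careful about — is the translation between the tree-automaton/height conventions used in Section~\ref{sec:tree} and any unranked or word-based notion of regular tree language growth the reader might have in mind: one must check that "height" is the intended size measure and that the trivial-order specialisation genuinely recovers the plain counting function $|L|_{=n}$ rather than something that collapses distinct trees. But this is exactly what the height definition and the induction above deliver, so there is no real obstacle; the corollary is essentially a restatement of the theorem in the degenerate case, and a one-paragraph proof pointing to the discrete-order observation suffices.
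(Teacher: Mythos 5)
Your proof is correct and is exactly the paper's intended argument: the paper derives the corollary by specialising the trichotomy theorem to the trivial partial order, under which comparability of ground terms is just equality, so antichain growth coincides with language growth. Your write-up merely makes explicit the easy induction and the identification of maximum antichains in $L_{=n}$ with $L_{=n}$ itself, which the paper leaves implicit.
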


Finally, we show that there is a polynomial algorithm to detect doubly
exponential growth, by determining whether or not the language of a given NFTA
contains a pair of trousers.

\begin{theoremrep}\label{thm:trousersalg}
There exists a polynomial time algorithm to determine whether the language of a
given NFTA has doubly exponential growth.
\end{theoremrep}
\begin{proof}
We show how to determine whether $P_{q_0}(\Aa)=\emptyset$ for fixed $q_0$.

We proceed similarly to the Reduction Algorithm in \cite{tata2007} (p.25), which
iteratively computes the set $M$ of states $q$ such that $t\arels q(t)$ for some
$t$.  We first iteratively compute the set $M'$ of states $q$ such that there is
a unary context $t\in T(\FF,\{x_1\})$ such that $t[\sub{x_1}{q_0}]\arels q$.
We can then iteratively compute the set $M''$ of states $q$ such that there is a
binary context $t\in T(\FF,\{x_1,x_2\})$ such that
$t[\sub{x_1}{q_0},\sub{x_2}{q_0}]\arels q$.  Then $T_{q_0}(\Aa)\neq \emptyset$
if and only if $q_0\in M''$.

Concretely, the reduction algorithm from \cite{tata2007} proceeds as follows.
Initialise the set $X=\emptyset$.  For each transition rule
$f(q_1(x_1),\ldots,q_n(x_n))\rightarrow q$ in $\Delta$ such that
$q_1,\ldots,q_n\in X$, add $q$ to $X$.  Repeat this process until $X$ no longer
changes.  Then $X=M$ is the set of accessible states.

To compute the set $M'$ of states $q$ such that $t[\sub{x_1}{q_0}]\arels q$ for
some unary context $t$, first initialise the set $X'=\{q_0\}$.  For each
transition rule $f(q_1(x_1),\ldots,q_n(x_n))\rightarrow q$ in $\Delta$ such that
we have $q_1,\ldots,q_{k-1},q_{k+1},\ldots,q_n \in M$ and $q_k\in X'$ for some
$k$, add $q$ to $X'$.  Repeat this until $X'$ no longer changes, and then we
have $X'=M'$.

Finally we compute the set $M''$ of states $q$ such that $t[\sub{x_1}{q_0},
\sub{x_2}{q_0}] \arels q$ for some binary context $t$.  First initialise $X''$
to be the set of states $q$ such that there is some transition rule $f(q_1(x_1),
\ldots,q_n(x_n)) \rightarrow q$ in $\Delta$ where $f$ has arity at least 2, and
we have $q_1,\ldots,q_{k-1},q_{k+1},\ldots,q_{l-1},q_{l+1},\ldots,q_n \in M$ and
$q_k,q_l\in M'$ for some $k<l$.

For the iterative step, for each transition rule $f(q_1(x_1),\ldots,q_n(x_n)) 
\rightarrow q$ in $\Delta$ such that we have $q_1,\ldots,q_{k-1},q_{k+1},\ldots,
q_n \in M$ and $q_k\in X''$ for some $k$, add $q$ to $X''$.  Repeat this until 
$X''$ stabilises and then we have $M''=X''$.  
\end{proof}

\section{Open problems}\label{sec:openprobs}

It is remarkable that, many decades after the discovery of the dichotomy between 
polynomial and exponential language growth, and 11 years after the work of 
Gawrychowski, Krieger, Rampersad and Shallit~\cite{gawrychowski2008finding}, it
remains unknown whether there is an efficient algorithm to compute the order 
of exponential language growth of a given NFA.  Consequently we consider that resolving 
this question (by providing either a polynomial-time algorithm or an appropriate
hardness result) is the most important open problem in this area.

For a DFA, on the other hand, the order of exponential language growth is easily 
computed as the spectral radius of the transition matrix.  However, it is not 
clear how such `algebraic' methods can be applied to the case of antichain 
growth, and so a second open problem is to find a polynomial-time algorithm to compute the 
order of exponential antichain growth for DFA.  Such a result would have 
immediate application to the field of quantified information flow, since it 
would allow one to compute the flow rate in the `dangerous' linear case, at the 
cost of determinising the automaton representing the system (with overhead 
roughly corresponding to the amount of hidden state the system contains).

The final problem in this direction is the combination of the preceding 
two: to find a polynomial-time algorithm to compute the order of exponential 
antichain growth for a given NFA. 

Alternatively we may wish to ask not about growth rates in the asymptotic limit, but instead about the 
precise width of $L_{=n}$ or $L_{\leq n}$ for given $n$.  This is particularly 
relevant to applications in computer security, where we may want not just an
approximation `for sufficiently large $n$' but a concrete guarantee.  For the case of a 
language given as a DFA and $n$ given in unary there is a straightforward 
dynamic programming algorithm to compute these quantities (for details see p.89 of 
\cite{mestelthesis}), but what about for NFA and for more concise representations of $n$?

Finally we pose a more speculative question: what other phenomena, apart from 
information flow, can antichains with respect to the lexicographic order 
usefully represent?

 \bibliographystyle{splncs04}
 \bibliography{mybib}
\end{document}